\newtheorem{theorem}{Theorem}[section]
\newtheorem{corollary}[theorem]{Corollary}
\newtheorem{proposition}[theorem]{Proposition}
\theoremstyle{definition}
\theoremstyle{definition}
\theoremstyle{remark}
\newtheorem{example}[theorem]{Example}
\theoremstyle{definition} 
\newtheorem{definition}[theorem]{Definition}
\newcounter{tempthm}
\newcounter{tempsec}
\newcommand{\savecounter}[1]{\newcounter{thmcounter#1}
\setcounter{thmcounter#1}{\value{theorem}}
\newcounter{seccounter#1}
\setcounter{seccounter#1}{\value{section}}}
\newcommand{\usesavedcounter}[1]{\setcounter{tempthm}{\value{theorem}}
\setcounter{theorem}{\value{thmcounter#1}}
\setcounter{tempsec}{\value{section}}
\setcounter{section}{\value{seccounter#1}}}
\newcommand{\restorecounter}{\setcounter{theorem}{\value{tempthm}}
\setcounter{section}{\value{tempsec}}}
\newcommand{\Qm}{\mathbb{Q}}
\newcommand{\Nm}{\mathbb{N}}
\newcommand{\Rm}{\mathbb{R}}
\newcommand{\shortv}[1]{}
\DeclareMathOperator{\rank}{rank}
\DeclareMathOperator{\conv}{conv}
\DeclareMathOperator{\prob}{\mathbb{P}}
\DeclareMathOperator{\expect}{\mathbb{E}}
\DeclareMathOperator{\ce}{CE}
\DeclareMathOperator{\nash}{NE}
\DeclareMathOperator{\xe}{XE}
\DeclareMathOperator{\DNN}{DNN}
\DeclareMathOperator{\sym}{Sym}
\newcommand{\ilim}{\mathop{\varprojlim}\limits}
\newcommand{\xesym}{\xe_{\sym}}
\newcommand{\nesym}{\nash_{\sym}}
\newcommand{\cesym}{\ce_{\sym}}
\newcommand{\dsym}{\Delta_{\sym}}
\newcommand{\dsympi}{\dsym^\Pi}
\newcommand{\npowk}[2]{#2^{(#1)}}
\newcommand{\npow}[1]{\npowk{N}{#1}}
\providecommand{\abs}[1]{\lvert#1\rvert}
\newcommand{\eqdef}{:=}
\title{\LARGE \bf
Exchangeable Equilibria, Part I: \\ Symmetric Bimatrix Games
}
\author{Noah D.\ Stein,  Asuman Ozdaglar, and Pablo A.\ Parrilo
\thanks{The first author is at Analog Devices $\vert$ Lyric Labs, reachable at {\tt\small noah.stein@analog.com}.  The remaining authors are at the Laboratory for Information and Decision Systems of the Massachusetts Institute of Technology, reachable at {\tt\small asuman@mit.edu} and  {\tt\small parrilo@mit.edu}.}
      }
\begin{document}

\maketitle




\begin{abstract}
We introduce the notion of exchangeable equilibria of a symmetric bimatrix game, defined as those correlated equilibria in which players' strategy choices are conditionally independently and identically distributed given some hidden variable.  We give several game-theoretic interpretations and a version of the ``revelation principle''.  Geometrically, the set of exchangeable equilibria is convex and lies between the symmetric Nash equilibria and the symmetric correlated equilibria.  Exchangeable equilibria can achieve higher expected utility than symmetric Nash equilibria.

Keywords: Correlated equilibrium, exchangeability, equilibrium concepts, symmetric games.  
\end{abstract}

\section{Introduction}

The contribution of this paper is to introduce exchangeable equilibria, a solution concept for symmetric games intermediate between symmetric Nash and correlated equilibria.  For simplicity we restrict to symmetric bimatrix (two-player) games throughout.  The more technical multiplayer case will be treated in Part~II of this paper, with computational issues covered in future work.\footnote{For more details on these topics, see the first author's doctoral thesis \cite{s:doctoralthesis}.}

\subsection{Motivation}

A basic tenet of decision theory is that preferences are specific to each individual, and so interpersonal comparison of utilities is meaningless.  In practice this belief is often dropped in favor of other simplifying assumptions which can be justified within the setting at hand.  One important example of this phenomenon is zero-sum bimatrix games.  The zero-sum condition says not only that the players' preferences over outcomes are opposite, but also that their preferences over lotteries over outcomes are opposite.  This is a very strong condition indeed, and while it may never be true exactly, it is often a reasonable approximation to model situations in which there seems to be no opportunity for cooperation.

Similarly, symmetric bimatrix games are an idealization in which we take the roles and preferences of the players to be identical.  When such an approximation is reasonable, we can also investigate the consequences of a stronger assumption: that the players are independent instances of an identical decision-making agent, or clones, as it were.  Doing so, we should expect both players would make the same choice when confronted with the same situation.  They interpret their observations of the environment in the same way.  If the setup is such that we have reason to believe that the players will choose their strategies statistically independently, then our solution concept of choice should be symmetric Nash equilibrium.  Here we explicitly use the assumption that the players are identical; in particular they have not been provided with a way to break the symmetry so as to choose an asymmetric Nash equilibrium.\footnote{General bimatrix games can be symmetrized by considering two copies of the game played simultaneously with each player taking opposite roles in the two copies.  In this way results from the symmetric theory can be applied to asymmetric games.  This idea of players ``putting themselves in each other's shoes'' incorporates a notion of fairness into the analysis of asymmetric games.  Rawls explores the philosophical implications of taking this idea as the basis for a theory of justice \cite{r:tj}, though in a way which is at odds with some fundamental game-theoretic principles \cite{h:cmpsbmcjrt,b:sc1hr}.}

This raises a natural question: what is the appropriate solution concept to use if statistical independence of actions is not a reasonable assumption and some correlation may be expected?  For a general game the answer would be correlated equilibrium\footnote{See Aumann's argument in \cite{a:ceebr}.  The key assumption forming the basis of that argument is that the players have common prior beliefs.  While this is debatable in some contexts, it is certainly true in the symmetric setting considered here.  Put another way, the philosophical or epistemic argument for correlated equilibrium is even stronger in symmetric games than in general games.}, so the obvious answer in the symmetric setting is symmetric correlated equilibrium.  That is to say, we could expect any correlated equilibrium which is invariant (as a probability distribution) under interchange of the players.

This obvious answer is correct insofar as the symmetry of the situation rules out any correlated equilibria which are not symmetric.  It is our goal to argue that symmetry can also rule out some of the symmetric correlated equilibria!

To see this, we begin with the tautological statement that each player bases his action on the state of the world.  Perhaps he plays the same action at all states, perhaps not.  Insofar as he does not have full knowledge of the state of the world his action must be based on some noisy observations of the world.

Since the players are identical they both make the same measurements and interpret them in the same way, but the outcomes of these measurements may differ.  In this way and only this way may the players' actions differ.  In particular our assumption that the players are identical means that the players cannot correlate their actions based on knowledge of each other's types, or past histories of actions, or any player-specific information.  Only correlation based on the external world is possible.

For example, the players may wish to base their actions on tomorrow's weather.  This is an aspect of the underlying state of the world, albeit one which may not be known precisely to the players (it is a \emph{hidden variable}), so they can only condition their actions on their estimates of the weather.  Perfect correlation may occur if these estimates are obtained from a common weather report.  Weaker correlation may arise if both players make their own forecasts in terms of independent measurements of atmospheric data.

We define an \emph{exchangeable equilibrium} of a symmetric bimatrix game to be a correlated equilibrium distribution in which the players' actions are independent and identically distributed (i.i.d.) conditioned on some hidden variable (Definition~\ref{def:xecp}).  These are exactly the outcomes which could conceivably arise when rational players make independent environmental measurements, then interpret and act on this information in the same way.  This is a version of the ``revelation principle'' \cite{f:cecg} for exchangeable equilibria (compare Proposition~\ref{prop:symcorrintext} with Proposition~\ref{prop:corrintext}).  Exchangeable equilibria can achieve higher payoffs and social welfare than symmetric Nash equilibria (Example~\ref{ex:payoffsep}).

Over the course of the paper we motivate this definition from several viewpoints: game-theoretic, probabilistic, algebraic / geometric, and (while leaving a detailed discussion for future work) computational.  We begin by unpacking the definition.

Any probability distribution of two players' actions $X_1, X_2$ can be written as a square matrix $A$, with $A_{ij} \eqdef \prob(X_1 = i\text{ and }X_2 = j)$.  If the actions were i.i.d.\ their joint distribution could be written in matrix form as an outer product $xx^T$ for some probability vector $x$.  When the actions are merely conditionally i.i.d.\ the distribution is in fact a convex combination\footnote{The end result is the same whether we consider finite sums or infinite ``convex combinations'' weighted by probability distributions; see Section~\ref{sec:probdistexch}.} $\sum_i\lambda_i x_ix_i^T$ of terms of this form ($\lambda_i\geq 0$, $\sum_i \lambda_i =1$).

We might say that any matrix of probabilities which is not symmetric explicitly breaks symmetry, whereas a symmetric matrix which is not conditionally i.i.d.\ implicitly breaks symmetry.  We focus on the exchangeable equilibria, those correlated equilibria which do not break symmetry at all.

To illustrate this symmetry breaking, consider an $m\times m$ matrix $A$ (symmetric or not) of probabilities specifying the joint distribution of two random variables and let $e_i$ denote the $i^{\text{th}}$ unit column vector.  We can write $A = \sum_{i,j} A_{ij}e_ie_j^T$.  The matrix $e_ie_j^T$ is an independent distribution corresponding to two constants.  Therefore we can always view a sample taken according to a  probability matrix $A$ as the result of two ``measurements'' which are conditionally independent given some hidden variable.

Here we have taken the hidden variable to be the result of the two measurements.  Symmetry is broken in the sense that players' actions are not identically distributed given the hidden variable: if the value is $(i,j)$ the first player deterministically plays $i$ and the second plays $j$.  The players do not map their measurements of the hidden variable to actions in the same way.  Looked at differently, they are not measuring the same aspects of the hidden variable to determine their choices of action.

In general there are many such decompositions $A = \sum_i \lambda_i x_i y_i^T$.  We only wish to consider matrices $A$ which admit a decomposition $A = \sum_i \lambda_i x_i x_i^T$ respecting the symmetry of the problem.  Clearly lack of symmetry of $A$ is an obstruction to the existence of such a decomposition, but it is not the only obstruction: see Section~\ref{sec:preview} for an example.

%

Geometrically, the set of exchangeable equilibria arises naturally when considering convex relaxations of the set of Nash equilibria.  Let $\cesym$ denote the set of symmetric correlated equilibria of a symmetric game, and $X$ denote the set of symmetric, rank $1$, elementwise nonnegative matrices, i.e., matrices of the form $xx^T$ for column vectors $x\geq 0$.  Then we have
\begin{equation*}
\stackrel{\text{Nash}}{\cesym\cap X}\ \subseteq\ \stackrel{\text{convex hull of Nash}}{\conv(\cesym\cap X)}\ \subseteq\ \stackrel{\text{exchangeable}}{\cesym\cap\conv(X)}\ \subseteq\ \stackrel{\text{correlated}}{\cesym},
\end{equation*}
where each type of symmetric equilibrium is defined by the set written below it.  


We will discuss three further interpretations of exchangeable equilibria in this paper, in addition to the above motivations via hidden variables or geometry.  First, we show that these are the equilibria which can arise as a Bayesian observer's predictions when players are chosen from a large pool of agents whom the observer views as interchangeable.  Second, we view them as symmetric correlated equilibria of games with many players which can be broken up into identical symmetric pairwise interactions between these players.  In particular the exchangeable equilibria are those symmetric correlated equilibria which do not depend on the exact number of players.  Third, the exchangeable equilibria are those which can be implemented with a particular kind of correlation scheme.

\subsection{Preview}
\label{sec:preview}

\begin{table}[tbp]
\begin{center}
\begin{tabular}{c|c|c|} $(u_\text{row},u_\text{col})$ & Wimpy & Macho \\ \hline Wimpy & $(4,4)$ & $(1,5)$ \\  \hline Macho & $(5,1)$ & $(0,0)$ \\ \hline
\end{tabular}
\end{center}
\caption{The game of Chicken.  Two players ride their bikes towards each other.  At the last moment, each can choose to either be Wimpy and veer off to the side, or be Macho and remain on the collision course.  One player chooses a row, the other a column, and the corresponding cell of the table shows their respective utilities.}
\label{tab:chicken}
\end{table}

We illustrate the theory of exchangeable equilibria using the game Chicken (Table~\ref{tab:chicken}).  This game has two asymmetric pure Nash equilibria and one symmetric mixed Nash equilibrium. We write the corresponding joint distributions as matrices.  Each entry is the probability of choosing the corresponding strategy profile.
\begin{equation*}
\nash = \left\{\begin{bmatrix} 0 & 0 \\ 1 & 0\end{bmatrix},\begin{bmatrix} 0 & 1 \\ 0 & 0 \end{bmatrix}, \begin{bmatrix}1/4 & 1/4 \\ 1/4 & 1/4\end{bmatrix}\right\}.
\end{equation*}
In the same notation a probability matrix $\left[\begin{smallmatrix} p & q \\ q & r\end{smallmatrix}\right]$ with $p,q,r\geq 0$ and $p+2q+r=1$ is a symmetric correlated equilibrium if and only if $q\geq p,r$.  These inequalities define the polytope
\begin{equation}
\label{eq:chickencesym}
\cesym = \conv\left\{\begin{bmatrix} 0 & 1/2 \\ 1/2 & 0\end{bmatrix}, \begin{bmatrix} 1/3 & 1/3 \\ 1/3 & 0\end{bmatrix}, \begin{bmatrix} 0 & 1/3 \\ 1/3 & 1/3\end{bmatrix}, \begin{bmatrix} 1/4 & 1/4 \\ 1/4 & 1/4\end{bmatrix}\right\}.
\end{equation}

Consider the symmetric correlated equilibrium
\begin{equation*}
B\eqdef\begin{bmatrix}0 & 1/2 \\ 1/2 & 0\end{bmatrix}.
\end{equation*}
Is it an exchangeable equilibrium?  That is to ask, is it conditionally i.i.d.?  Equivalently, is it of the form $\sum \lambda_i x_i x_i^T$?

Any matrix of this form would be positive semidefinite, but $\det B = -\frac{1}{4}<0$, so $B$ is not positive semidefinite. It is a symmetric correlated equilibrium which is not exchangeable.  In any game positive semidefiniteness is necessary for a correlated equilibrium to be exchangeable.  Known results (Theorem~\ref{thm:cpvsdnn}) show it is sufficient in games with at most four strategies.

The same test rules out the second and third extreme points listed in \eqref{eq:chickencesym} as not exchangeable.  In fact for Chicken the condition that the determinant be nonnegative ($q^2\leq pr$) and the symmetric correlated equilibrium conditions have only one common solution:
\begin{equation*}
\left\{\begin{bmatrix}1/4 & 1/4 \\ 1/4 & 1/4\end{bmatrix}\right\} = \nesym = \xesym \subsetneq \cesym.
\end{equation*}
This example illustrates several general properties of the set of exchangeable equilibria that we will establish in this paper: $\conv(\nesym)\subseteq\xesym$ (Proposition~\ref{prop:nashsubsetxe}) with equality in games with two strategies (Theorem~\ref{thm:2by2}) and $\xesym\subsetneq\cesym$ (Proposition~\ref{prop:nashsubsetxe}) with strict containment whenever asymmetric Nash equilibria exist (Theorem~\ref{thm:asymnashstrict}).

\subsection{Previous work}
As we will see, the idea of exchangeable equilibrium arises naturally when one combines the notion of exchangeable random variables with the correlated equilibrium solution concept.  Both of these are large areas of study in their own right and we make no attempt at an exhaustive survey of the literature of one or the other.  However, several papers either have led toward the idea of exchangeable equilibrium in some way or seem to be particularly related to what is accomplished by exchangeable equilibria and we review those here.

One such example is Hart and Schmeidler's minimax proof of the existence of correlated equilibria \cite{hs:ece} or the similar proof by Nau and McCardle \cite{nm:cbng}.  These arguments share the somewhat mysterious quality that at an intermediate stage they work with product distributions which are not equilibria, but in the end, just when a correlated equilibrium is found, the product structure vanishes.  The result is a seemingly arbitrary correlated equilibrium with no apparent extra structure; in particular the equilibrium need not have the product structure which would make it a more desirable (in the sense of strength of the existence proof) Nash equilibrium.  Considering what extra structure might be obtained from such an argument, in particular if we require the game to be symmetric and try to make the argument respect the symmetry as much as possible, is one route to our notion of exchangeable equilibrium.  Two existence proofs along these lines will be presented in Part~II of this paper.

Another example is Brandenburger's survey of epistemic game theory \cite{b:keg}.  He mentions that the standard assumption of probabilistic independence of different players' strategies in noncooperative game theory is somewhat suspect and perhaps unnatural.  In a footnote he offers the concept of exchangeability as an example of a more natural way to capture our ignorance of the distinctions between random variables.  However, this is not explored further.

The notion of exchangeable equilibrium bears at least an outward resemblance to the work of Hillas, Kohlberg, and Pratt on an outside observer's assessment of the outcome of a game \cite{hkp:ceneoag}.  They consider an observer watching a given $n$-player game being played repeatedly.  Each time the game is played by a new set of players disjoint from the set of all previous players, so the observer views these interactions as exchangeable: he has no preconceived notions about how the order of these repeated plays relates to their outcomes.  In other words, his prior distribution over actions is not affected by simultaneously swapping all $n$ players in copy $j$ of the game with the corresponding $n$ players in copy $k$ for any $j,k$.  The players do not have access to the history of play, though the observer does.

Hillas, Kohlberg, and Pratt characterize correlated equilibria of the original game by considering whether at each play of the game the observer can offer advice based on his past observations enabling some player to increase his expected payoff.  They argue that the players should have a better understanding of the game than the observer, and prove that the limiting distribution of play is a correlated equilibrium if and only if the observer cannot offer such helpful advice.

If the observer also does not have any preconceived notions about the connections between the players in each instance of the game (i.e., his prior distribution on actions is not affected by swapping the copy of player $i$ in instance $j$ of the game with the copy in instance $k$ for any $i,j,k$), then Hillas, Kohlberg, and Pratt show that the long-run distribution of play will be a Nash equilibrium.  The paper \cite{hkp:ceneoag} does not consider solution concepts between Nash and correlated equilibrium.  However, its use of exchangeability suggests an ideological kinship with the present work. It seems likely that the concept of exchangeable equilibrium could be reinterpreted in that setup, though we do not do so here.

One interpretation of exchangeable equilibria is that they are those correlated equilibria which extend to symmetric correlated equilibria of games with an arbitrarily large number of identical interactions (Section~\ref{sec:manyplayerinterp}).  That is to say, they are robust to the number of players: the players could not profitably deviate even if they knew the number of players.  This is in contrast to the paper of Myerson on games with many players in which the number of players is modeled probabilistically \cite{m:pupg}.  That work is in a Bayesian game setting and so not directly comparable to ours, but the distinction between which equilibria are sensitive to a given probabilistic model and which equilibria are not is worth making.  In a given situation, one assumption or the other may be more natural.

Another interpretation of exchangeable equilibria is that they are those correlated equilibria in which the correlating device takes a specific form: the players choose strategies i.i.d.\ conditioned on some hidden variable (Section~\ref{sec:hiddenvarinterp}).  In general it is an interesting problem to study which equilibria can arise when different assumptions are made about the correlating device.  Of course the most well-studied class of correlating device makes players choose their strategies independently and corresponds to the mixed Nash equilibria.  But there is also Sorin's notion of ``distribution equilibria,'' which are correlated equilibria in which each player gets the same payoff conditional on all outcomes and which are motivated by arguments from evolutionary game theory \cite{s:dede}.  Another good example of this type is Du's classification of the correlated equilibria which can arise when the players correlate on their hierarchy of beliefs about the play of the game \cite{d:cevhb}.



\subsection{Outline}
We begin with Section~\ref{sec:background}, fixing notation and reviewing background material on exchangeability, conditionally i.i.d.\ random variables, games, and equilibria.  With all this in place in Section~\ref{sec:xebasic} we introduce exchangeable equilibria and establish their fundamental properties.  The two sections which follow are Section~\ref{sec:interp} on interpretations and Section~\ref{sec:xeexamples} on examples.  These are largely independent of each other and can be read in either order.  We close with Section~\ref{sec:conclusions} on directions for future work.  This final section also outlines the content of Part~II of this paper.

\section{Background and notation}
\label{sec:background}
In this section we recall some known results to fix notation and language.  For further information on exchangeability see e.g.\ \cite{d:ffdf} and on games and equilibria see e.g.\ \cite{ft:gt}.

\subsection{Probability distributions, exchangeability, and conditionally i.i.d.\ random variables}
\label{sec:probdistexch}
For a (typically finite) set $T$ we write $\Delta(T)$ to denote the set of probability distributions\footnote{In case $T$ is infinite, such as when $T = \Delta(S)$ for a finite set $S$, we allow only the Borel probability measures with respect to an implied topology.} on $T$.  The mass of a singleton $t\in T$ under a probability distribution $\pi\in\Delta(T)$ will be denoted $\pi(t)\eqdef \pi(\{t\})$ for simplicity.  For finite $T$ we will view $\Delta(T)$ as a set of column vectors and $\Delta(T^2)\eqdef\Delta(T\times T)$ as a set of matrices.  All random variables will take values in a set $T$ with $2\leq \lvert T\rvert < \infty$ to avoid trivialities on one end and unnecessary complications on the other.  We write $T^n$ and $T^\infty$ for the products of $n$ and countably many copies of $T$, respectively.

Throughout the paper we will deal with collections of random variables which are \emph{exchangeable}, meaning they are in some sense indistinguishable.  Such is the case when the measurements they represent are unlabeled or labeled in a way which does not relate to their outcome.  We will usually assume these are given to us in some arbitrary order.  The defining property of exchangeability, then, is that the joint distribution does not depend on this order.

\begin{definition}
For finite $n$, a probability distribution in $\Delta(T^n)$ is \textbf{$n$-exchangeable} if it is invariant under permutations of the indices, or in other words, if it represents a sequence  $X_1,\ldots, X_n$ of $T$-valued random variables which is equal in distribution to $X_{\sigma(1)},\ldots,X_{\sigma(n)}$ for all permutations $\sigma$ on $\{1,\ldots,n\}$.  The set of $n$-exchangeable distributions is denoted $\dsym(T^n)$.
\end{definition}

A typical example of an $n$-exchangeable sequence arises by sampling without replacement.  If $4$ white balls and $6$ black balls are placed in a bag, shaken, and then removed one by one, the distribution of the resulting sequence of colors is $10$-exchangeable.

\begin{definition}
A probability distribution in $\Delta(T^\infty)$ representing a sequence $X_1,X_2,\ldots$ is said to be \textbf{$(\infty)$-exchangeable} if it is invariant under permutations of finitely many indices\footnote{This finiteness condition is natural in that it allows us to identify $\dsym(T^\infty)$ with the inverse limit $\ilim\dsym(T^n)$ using the Kolmogorov Consistency Theorem ($12.1.2$ in \cite{d:rap}, for example).}, or in other words if the distribution of $X_1,\ldots,X_n$ is $n$-exchangeable for all $n\in\Nm$.  The set of exchangeable distributions is denoted $\dsym(T^\infty)$.  We call a finite or infinite sequence of random variables \textbf{$n$-exchangeable} of \textbf{$(\infty)$-exchangeable}, respectively, if its distribution is.
\end{definition}

One example of exchangeability is a sequence of i.i.d.\ fair coin flips, or indeed any i.i.d.\ sequence.  However, this does not exhaust the possibilities.  Another example is the case in which $X_i = X_1$ almost surely for all $i$.  This is exchangeable regardless of the distribution on $X_1$, but is only independent in the trivial case when $X_1$ is deterministic.

The definition of exchangeability amounts to a collection of linear equations a given probability distribution may or may not satisfy.  This means the set of exchangeable distributions is convex: an arbitrary mixture of exchangeable distributions is also exchangeable.  Therefore a sequence of random variables which are i.i.d.\ conditioned on some other random variable is exchangeable.  In the case of an infinite sequence of exchangeable random variables, the converse is also true:

\begin{definition}
\label{def:condiid}
A (finite or infinite) sequence of random variables is called \textbf{conditionally i.i.d.} if it is i.i.d.\ conditioned on some\footnote{As always in probability theory, we assume the underlying sample space is rich enough to define whatever additional random variables we need.  Alternatively, we could extend the definition of conditionally i.i.d.\ to include any sequence of random variables equal in distribution to a sequence which is conditionally i.i.d.\ in the sense of Definition~\ref{def:condiid}.} random variable.
\end{definition}

\begin{theorem}[de Finetti]
\label{thm:definetti}
An infinite sequence $X_1, X_2, \ldots$ of random variables is exchangeable if and only if it conditionally i.i.d.
\end{theorem}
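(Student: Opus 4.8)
The plan is to prove de Finetti's theorem, whose nontrivial direction asserts that any infinite exchangeable sequence $X_1, X_2, \ldots$ of $T$-valued random variables is conditionally i.i.d. The easy direction is already noted in the text: a conditionally i.i.d.\ sequence is a mixture of i.i.d.\ sequences, each of which is exchangeable, and exchangeability is preserved under mixtures since it is defined by linear constraints. So I would dispatch that direction in one sentence and concentrate on the converse.

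For the hard direction I would use the method of moments specialized to the finite alphabet $T$, which is cleaner than the general measure-theoretic argument. Fix $n$ and a pattern $(t_1, \ldots, t_n) \in T^n$, and consider the empirical frequency vector $\widehat{P}_N \in \Delta(T)$ recording the proportion of each symbol among $X_1, \ldots, X_N$. The first key step is to compute, using only exchangeability, the joint probabilities $\prob(X_1 = t_1, \ldots, X_n = t_n)$ in terms of sampling \emph{without} replacement from the first $N$ variables: for $N \geq n$, exchangeability makes every ordered $n$-subsequence of $X_1, \ldots, X_N$ equal in distribution, so this probability equals the expected value of a falling-factorial expression in the counts of $\widehat{P}_N$. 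The second step is to let $N \to \infty$ and observe that the without-replacement sampling probabilities converge to the with-replacement (i.e.\ i.i.d.) sampling probabilities, so that
\begin{equation*}
\prob(X_1 = t_1, \ldots, X_n = t_n) = \lim_{N \to \infty} \expect\Bigl[\prod_{i=1}^n \widehat{P}_N(t_i) + o(1)\Bigr].
\end{equation*}
The third step is to extract a limiting random measure: since $\Delta(T)$ is a compact subset of $\Rm^{\lvert T\rvert}$, the sequence $\widehat{P}_N$ takes values in a compact metric space, and I would argue (either via the reverse martingale convergence theorem applied to the exchangeable filtration, or by a direct tightness-plus-subsequence argument) that $\widehat{P}_N$ converges almost surely to some random limit $P_\infty \in \Delta(T)$.

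Combining these steps, I would pass the limit inside the expectation by bounded convergence to obtain
\begin{equation*}
\prob(X_1 = t_1, \ldots, X_n = t_n) = \expect\Bigl[\prod_{i=1}^n P_\infty(t_i)\Bigr],
\end{equation*}
which says precisely that, conditioned on the random variable $P_\infty$, the coordinates $X_1, \ldots, X_n$ are i.i.d.\ with common distribution $P_\infty$. Since $n$ and the pattern were arbitrary, $P_\infty$ is the hidden variable of Definition~\ref{def:condiid} and the sequence is conditionally i.i.d. The main obstacle is the almost-sure convergence of the empirical measures $\widehat{P}_N$ to a genuine random limit $P_\infty$; this is where exchangeability does the real work, and the cleanest justification is to invoke the reverse martingale convergence theorem for the coordinatewise frequencies, each of which is a bounded reverse martingale with respect to the decreasing sequence of exchangeable $\sigma$-algebras. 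Everything else reduces to the combinatorial bookkeeping of without-replacement versus with-replacement sampling, which I would state but not belabor since this is a classical result that we cite rather than claim as new.
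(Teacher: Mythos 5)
The paper does not prove this theorem at all---it is a classical result, attributed to de Finetti, and the text simply points the reader to a reference for the proof. So there is nothing in the paper to compare your argument against line by line; what I can say is that your sketch is a correct outline of the standard modern proof for a finite alphabet: reverse martingale convergence along the exchangeable $\sigma$-algebras gives almost-sure convergence of the empirical measures $\widehat{P}_N$ to a random limit $P_\infty$, and the without-replacement versus with-replacement comparison (the falling-factorial ratio differs from the product of frequencies by $O(n^2/N)$, uniformly) identifies the finite-dimensional distributions as $\expect\bigl[\prod_i P_\infty(t_i)\bigr]$. One small point deserves a sentence more care in a written-up version: matching all unconditional pattern probabilities shows the \emph{law} of the sequence is the mixture $\int \theta^{\otimes\infty}\,d\mu(\theta)$ with $\mu$ the law of $P_\infty$, which is equality in distribution with a conditionally i.i.d.\ sequence rather than the literal statement that the given $X_i$ are i.i.d.\ conditioned on $P_\infty$; the paper's footnote to Definition~\ref{def:condiid} explicitly allows exactly this reading, and the stronger conditional statement also follows from the same reverse-martingale argument if you want it. Given that the authors cite rather than prove the result, your level of detail is appropriate, and the argument as outlined is sound.
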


For a proof see e.g.\ \cite{s:tos}.  The upshot is that we can think of any exchangeable sequence of random variables as arising from a two-stage process\footnote{The classic example of an exchangeable sequence which does not obviously arise in this way is the P\'{o}lya urn model \cite{p:urn}.}.  First, a random element $\Theta\in\Delta(T)$ is secretly selected according to some distribution in $\Delta(\Delta(T))$.  Second, the sequence of random variables $X_1, X_2,\ldots$ is constructed to be i.i.d.\ with distribution $\Theta$.

The following two results summarize some facts about conditionally i.i.d.\ pairs $X_1, X_2$ and the set of all distributions thereof.  Proposition~\ref{prop:excp} shows that such pairs take a simple form and will often be invoked implicitly.

\begin{proposition}
\label{prop:cpgeo}
Consider the subset of $\Delta(T^2)$ corresponding to distributions of pairs $X_1,X_2$ which are conditionally i.i.d.\ given a parameter taking only finitely many values, i.e., those which can be written in matrix form as
\begin{equation}
\label{eq:finitepar}
\sum_{i=1}^k \lambda_i w_i w_i^T\text{ for some }k, \text{ with all }\lambda_i>0\text{ and }\sum_{i=1}^k\lambda_i=1.
\end{equation}
This set is convex, compact, and semialgebraic (definable by finitely many polynomial inequalities).
\end{proposition}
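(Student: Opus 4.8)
The plan is to prove the three properties—convexity, compactness, and semialgebraicity—separately, since they are somewhat independent in character. Throughout let me denote by $K$ the set in question, consisting of matrices of the form $\sum_{i=1}^k \lambda_i w_i w_i^T$ with the $w_i \in \Delta(T)$ probability vectors, $\lambda_i > 0$, and $\sum_i \lambda_i = 1$. The key observation that unifies everything is that $K$ is precisely the convex hull of the set $X = \{ww^T : w \in \Delta(T)\}$ of symmetric rank-one probability matrices built from a single distribution. Once this identification is made, convexity is immediate.

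\textbf{Convexity.} First I would show $K = \conv(X)$. Any element of $K$ as written in \eqref{eq:finitepar} is visibly a convex combination of the points $w_iw_i^T \in X$, so $K \subseteq \conv(X)$; conversely any finite convex combination of points of $X$ has exactly the form \eqref{eq:finitepar} (dropping zero-weight terms to ensure $\lambda_i > 0$). Hence $K = \conv(X)$ and convexity follows by definition of the convex hull.

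\textbf{Compactness.} The hard part is compactness, and this is where I would spend the most care. The set $X$ is the continuous image of the compact set $\Delta(T)$ under the map $w \mapsto ww^T$, hence $X$ is compact. In a finite-dimensional space (here the space of symmetric $|T|\times|T|$ matrices) the convex hull of a compact set is compact—this is a standard consequence of Carath\'eodory's theorem, which moreover bounds the number of terms $k$ by one more than the dimension, so that $\conv(X)$ is the image of the compact set $\Delta^{(N)} \times X^{N+1}$ (simplex of weights times copies of $X$) under the continuous map $(\lambda, w_0, \ldots, w_N) \mapsto \sum_i \lambda_i w_iw_i^T$, where $N$ is the dimension of the ambient symmetric-matrix space. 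The continuous image of a compact set is compact, giving compactness of $K$.

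\textbf{Semialgebraicity.} For the final property I would invoke the Tarski--Seidenberg theorem on quantifier elimination over the reals. A matrix $A$ lies in $K$ exactly when there exist weights and vectors realizing the bounded representation from the Carath\'eodory argument; this membership condition is expressible as a first-order formula over $\Rm$ with existential quantifiers over the (finitely many, by the dimension bound) real parameters $\lambda_i$ and the entries of the $w_i$, all constraints being polynomial equalities and inequalities (nonnegativity, summing to one, and $A = \sum_i \lambda_i w_iw_i^T$). By Tarski--Seidenberg the projection eliminating these quantifiers yields a semialgebraic description of $K$ purely in terms of the entries of $A$. The main obstacle is ensuring the existential formula uses only \emph{finitely many} quantified variables so that Tarski--Seidenberg applies; this is exactly what the Carath\'eodory bound on $k$ secures, which is why I would establish that bound before turning to quantifier elimination.
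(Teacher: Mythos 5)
Your proof is correct and follows essentially the same route as the paper's: identify the set as the convex hull of the compact set of i.i.d.\ distributions, use Carath\'eodory's theorem to get compactness and a bounded-length representation, and then apply quantifier elimination to conclude semialgebraicity. The only difference is that you spell out the details (the explicit continuous surjection for compactness, the first-order formula) more fully than the paper does.
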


\begin{proof}
Call the set in question $Z$.  Let $X$ be the set of i.i.d.\ distributions in $\Delta(T^2)$.  Then $X$ is compact and $Z$ is the convex hull of $X$, so compact and convex\cite{bno:convex}.  The set $X$ is semialgebraic.  By Carath\'{e}odory's theorem any element of $Z$ can be written as a convex combination of a bounded number of elements of $X$, so $Z$ can be described by a first-order formula over the real-closed field $\Rm$.  Quantifier elimination then shows that $Z$ is semialgebraic \cite{bpr:arag}.
\end{proof}

\begin{proposition}
\label{prop:excp}
If a pair $X_1, X_2$ is conditionally i.i.d.\ then we may assume without loss of generality that the associated parameter takes its values in a finite set.  Equivalently, the joint distribution of $X_1,X_2$ is of the form \eqref{eq:finitepar}.  That is to say, the set described in Proposition~\ref{prop:cpgeo} is the set of \emph{all} conditionally i.i.d.\ distributions. 
\end{proposition}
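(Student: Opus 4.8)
The plan is to realize an arbitrary conditionally i.i.d.\ pair with its parameter living in the finite-dimensional simplex $\Delta(T)$, and then to recognize its joint distribution as a barycenter that Proposition~\ref{prop:cpgeo} forces into the compact set $Z$ of finite convex combinations. First I would unwind Definition~\ref{def:condiid}: suppose $X_1,X_2$ are i.i.d.\ conditioned on a parameter $\Theta$ valued in some arbitrary space. Conditioned on $\Theta$, the two variables share a common conditional law $W\eqdef\phi(\Theta)\in\Delta(T)$, and conditional independence says the conditional joint law of $(X_1,X_2)$ is the matrix $WW^T$. Since the joint distribution of $(X_1,X_2)$ depends on $\Theta$ only through $W$, we may replace $\Theta$ by $W$, so without loss of generality the parameter takes values in $\Delta(T)$, a compact subset of $\Rm^{\abs{T}}$. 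Writing $\nu\in\Delta(\Delta(T))$ for the law of $W$, the joint distribution becomes the expectation
\[
A \;=\; \expect[\,WW^T\,] \;=\; \int_{\Delta(T)} w w^T \, d\nu(w).
\]

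Next I would interpret $A$ as the barycenter of the pushforward of $\nu$ under the continuous map $w\mapsto ww^T$, a probability measure supported on the compact set $X=\{ww^T:w\in\Delta(T)\}$ of i.i.d.\ distributions, all living in the finite-dimensional space of symmetric matrices. The key structural input, already available from Proposition~\ref{prop:cpgeo}, is that $\conv(X)=Z$ is \emph{compact}, hence closed, so $\clconv(X)=\conv(X)=Z$. It therefore suffices to show $A\in\clconv(X)$, which follows from the separating-hyperplane characterization of the closed convex hull: for any linear functional $\ell$ on symmetric matrices, $\ell(A)=\int_{\Delta(T)}\ell(ww^T)\,d\nu(w)\le\max_{w\in\Delta(T)}\ell(ww^T)$, so $A$ lies in every closed half-space containing $X$. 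Thus $A\in\clconv(X)=Z$, and unpacking the definition of $Z$ in \eqref{eq:finitepar} yields $A=\sum_{i=1}^k\lambda_i w_iw_i^T$ as claimed.

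The only step requiring real care is the passage from the integral, a ``continuous convex combination,'' to a genuine \emph{finite} convex combination, and this is exactly where finite-dimensionality does the work. In $\Rm^{\abs{T}}$ the convex hull of a compact set is itself compact by Carath\'eodory's theorem (the same fact invoked in Proposition~\ref{prop:cpgeo}), so there is no gap between $\conv(X)$ and $\clconv(X)$ for the barycenter to slip through. In an infinite-dimensional setting a barycenter can escape the convex hull into its closure, so the reduction in the first step to the finite set $\Delta(T)$ of possible conditional laws is essential rather than cosmetic: it is what lets de Finetti-style mixtures over arbitrary parameter spaces collapse to the finitely-supported form.
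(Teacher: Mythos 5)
Your proof is correct and follows essentially the same route as the paper's: both reduce the parameter to its conditional law in $\Delta(T)$, express the joint distribution as the barycenter $\int ww^T\,d\nu(w)$, and conclude via the compactness (hence closedness) of $Z=\conv(X)$ established in Proposition~\ref{prop:cpgeo}. The only cosmetic difference is that the paper places the barycenter in $\cl(Z)$ by approximating the Lebesgue integral with finite sums, whereas you use the supporting-half-space characterization of $\clconv(X)$; both steps are standard and rest on the same key lemma.
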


\begin{proof}
Suppose the pair $X_1,X_2$ is i.i.d.\ conditioned on some parameter $\Theta$ with (not necessarily finitely-supported) distribution $\lambda\in\Delta(\Delta(T))$.  The unconditional joint distribution of $X_1$ and $X_2$ is $W \eqdef \int ww^T\, d\lambda(w)$.  By definition of the Lebesgue integral, $W$ is the limit of conditionally i.i.d.\ distributions each of whose parameters takes values in some finite set.   The set thereof is compact (Proposition~\ref{prop:cpgeo}), so it is closed and contains $W$.
\end{proof}

How can one be sure that two random variables are not conditionally i.i.d.?  In other words, given a probability distribution as a matrix, how can one certify that a decomposition of the form \eqref{eq:finitepar} does not exist?  A weak but easily applied condition is:

\begin{proposition}
\label{prop:cpzeros}
If $X_1,X_2$ are conditionally i.i.d.\ random variables taking values in the finite set $T$ and $\prob(X_1 = X_2 = t) = 0$ then $\prob(X_1 = t, X_2 = \tilde{t})=0$ for all $\tilde{t}\in T$.
\end{proposition}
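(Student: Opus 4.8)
The plan is to reduce to the finite decomposition guaranteed by Proposition~\ref{prop:excp} and then exploit the elementary fact that a sum of nonnegative terms can vanish only if each term vanishes. First I would invoke Proposition~\ref{prop:excp} to write the joint distribution of $X_1,X_2$ in matrix form as $A = \sum_{i=1}^k \lambda_i w_i w_i^T$, with each $\lambda_i > 0$ and each $w_i \in \Delta(T)$ a probability vector. Under this identification the hypothesis $\prob(X_1 = X_2 = t) = 0$ is exactly the statement that the diagonal entry $A_{tt}$ vanishes, and the desired conclusion is that the entire row $A_{t\tilde{t}}$ vanishes.

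Next I would read off the diagonal entry as $A_{tt} = \sum_{i=1}^k \lambda_i (w_i)_t^2 = 0$. Since every $\lambda_i$ is strictly positive and every summand $\lambda_i (w_i)_t^2$ is nonnegative, each summand must be zero, forcing $(w_i)_t = 0$ for all $i$. The claim then follows immediately: for any $\tilde{t} \in T$ we have $\prob(X_1 = t, X_2 = \tilde{t}) = A_{t\tilde{t}} = \sum_{i=1}^k \lambda_i (w_i)_t (w_i)_{\tilde{t}} = 0$, because the factor $(w_i)_t$ vanishes in every term.

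Honestly there is no serious obstacle here; the only point requiring any care is the entitlement to a \emph{finite} representation, which is precisely the content of Proposition~\ref{prop:excp}. Should one wish to sidestep that reduction, the same argument runs verbatim in the integral form $A = \int w w^T\, d\lambda(w)$: the vanishing of $A_{tt} = \int (w)_t^2\, d\lambda(w)$ forces $(w)_t = 0$ for $\lambda$-almost every $w$, whence $A_{t\tilde{t}} = \int (w)_t (w)_{\tilde{t}}\, d\lambda(w) = 0$. Either route gives the result in a line or two; the content of the proposition is really just the observation that positive semidefinite matrices arising as mixtures of rank-one nonnegative terms cannot have a zero on the diagonal without the corresponding row and column being identically zero.
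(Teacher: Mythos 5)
Your proof is correct and is essentially the paper's argument: the paper conditions on the hidden parameter $\Theta$ and observes that $0=\prob(X_1=X_2=t)=\int \prob(X_1=t\mid\Theta)^2\,d\prob(\Theta)$ forces $\prob(X_1=t\mid\Theta)=0$ almost surely, hence $\prob(X_1=t)=0$, which is exactly your integral-form remark; your finite-decomposition version via Proposition~\ref{prop:excp} is the same computation in matrix coordinates.
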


\begin{proof}
Let $\Theta$ be the hidden parameter conditioned on which the $X_i$ are i.i.d.  By assumption $\Theta$ satisfies $\prob(X_1 = t \mid \Theta)=0$ almost surely, so $\prob(X_1 = t) = \int \prob(X_1 = t\mid\Theta)\,d\prob(\Theta) = 0$.
\end{proof}

Using this condition one can immediately construct symmetric, elementwise nonnegative matrices which are not the joint distribution of any conditionally i.i.d.\ pair, such as $\left[\begin{smallmatrix} 0 & 1/2 \\ 1/2 & 0 \end{smallmatrix}\right]$.  A more broadly applicable necessary condition is double nonnegativity.

\begin{definition}A matrix is \textbf{doubly nonnegative} if it is symmetric, elementwise nonnegative, and positive semidefinite.  The set of doubly nonnegative $m\times m$ matrices is denoted $\DNN_m$.
\end{definition}

The advantage of this definition is that double nonnegativity of a matrix is easy to check, and the definitions immediately yield the forward direction in:

\begin{theorem}
\label{thm:cpvsdnn}
If an $m\times m$ matrix is the joint distribution of a conditionally i.i.d.\ pair then it is doubly nonnegative and its entries sum to one.  The converse holds if and only if $m\leq 4$.
\end{theorem}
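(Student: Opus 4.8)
The plan is to recast the statement in terms of \emph{completely positive} matrices and then invoke the classical theory of that matrix cone. Call a symmetric matrix $M$ completely positive if $M=\sum_i b_ib_i^T$ for finitely many elementwise nonnegative vectors $b_i$, and write $\mathrm{CP}_m$ for the cone of such $m\times m$ matrices. The key observation is that the joint distributions of conditionally i.i.d.\ pairs on $T$ with $\abs{T}=m$ are exactly the matrices in $\mathrm{CP}_m$ whose entries sum to one. Given a distribution $\sum_i\lambda_i w_iw_i^T$ as in \eqref{eq:finitepar}, the vectors $b_i\eqdef\sqrt{\lambda_i}\,w_i$ are nonnegative and display it as completely positive, and its entries sum to $\sum_i\lambda_i(\mathbf{1}^Tw_i)^2=\sum_i\lambda_i=1$. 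Conversely, if $M=\sum_i b_ib_i^T\in\mathrm{CP}_m$ has entries summing to one, discard any $b_i=0$, put $s_i\eqdef\mathbf{1}^Tb_i>0$, $w_i\eqdef b_i/s_i\in\Delta(T)$, and $\lambda_i\eqdef s_i^2$; then $M=\sum_i\lambda_iw_iw_i^T$ with $\sum_i\lambda_i$ equal to the entry sum of $M$, namely one, which is the form \eqref{eq:finitepar}.

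With this dictionary the forward direction is immediate, as already noted after the definition of double nonnegativity: each term $\lambda_iw_iw_i^T$ is symmetric, positive semidefinite, and (because $w_i\ge 0$) elementwise nonnegative, and all three properties survive the convex combination, so $M\in\DNN_m$; the entry sum was computed above.

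For the converse I would reduce it to the cone identity $\mathrm{CP}_m=\DNN_m$. Any nonzero doubly nonnegative matrix has strictly positive entry sum (its entries are nonnegative and not all zero) and so can be rescaled to sum to one, and rescaling by a positive constant preserves membership in each of $\mathrm{CP}_m$ and $\DNN_m$; hence the entry-sum-one slices of the two cones coincide if and only if the cones themselves do, so the converse in the theorem holds exactly when $\mathrm{CP}_m=\DNN_m$. For $m\le 4$ this is a classical fact: every doubly nonnegative matrix of order at most four is completely positive (equivalently, by the graph-theoretic characterization of completely positive matrices, no graph on at most four vertices contains an odd cycle of length at least five). For $m\ge 5$ I would exhibit the failure by taking a fixed $5\times 5$ doubly nonnegative matrix supported on a five-cycle that is not completely positive (a standard example from the completely positive matrix literature) and, for larger $m$, placing it as the leading diagonal block of a block-diagonal matrix whose other block is a positive diagonal matrix. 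The result is doubly nonnegative, but since a block-diagonal matrix is completely positive only if each block is---the vanishing off-block entries force every generating vector to be supported within a single block---it inherits the failure of complete positivity, and a final rescaling makes its entries sum to one.

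The main obstacle is really one of bookkeeping rather than depth: the substantive $m\le 4$ implication $\DNN_m\subseteq\mathrm{CP}_m$ is a nontrivial classical theorem that I would cite rather than reprove, so the new content lies in the clean correspondence between normalized completely positive matrices and conditionally i.i.d.\ distributions and in propagating the single $5\times 5$ counterexample to every $m\ge 5$ while simultaneously maintaining double nonnegativity, failure of complete positivity, and the normalization $\sum_{j,k}M_{jk}=1$.
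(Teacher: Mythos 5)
Your proposal is correct and follows essentially the same route as the paper: the forward direction is immediate from the definitions, and the converse is delegated to the classical theory of completely positive versus doubly nonnegative matrices (Maxfield--Minc for $m\le 4$, Hall's $5\times 5$ counterexample for the failure beyond that), exactly the sources the paper cites. The only additions on your part are routine bookkeeping the paper leaves implicit, namely the normalization dictionary between conditionally i.i.d.\ distributions and entry-sum-one completely positive matrices and the block-diagonal padding that propagates the $5\times 5$ counterexample to all $m\ge 5$, both of which check out.
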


Maxfield and Minc's proof of the converse (which is not obvious from the definitions) and a counterexample in the $m=5$ case due to Hall can be found in \cite{mm:mexxa}.  Diananda proved an equivalent result about the duals of these cones \cite{d:onnfrvsawnn}.  For our purposes, the utility of this result is that it allows us to decide quickly whether $4\times 4$ and smaller matrices are joint distributions of conditionally i.i.d.\ pairs.

\subsection{Games and equilibria}
\label{sec:games}
\begin{definition}A \textbf{(finite) game}, denoted by $\Gamma$ throughout, consists of $n\geq 2$ players, each equipped with a finite set $C_i$ of two or more \textbf{(pure) strategies} and a utility function $u_i: C\to\Rm$, which assigns a real number to each \textbf{strategy profile} in $C\eqdef \prod_i C_i$.

A two-player game is also called a \textbf{bimatrix game}, because the utilities can be specified by two matrices $A,B\in\Rm^C$: $A_{ij} = u_1(i,j)$ and $B_{ij} = u_2(i,j)$.  A bimatrix game is \textbf{symmetric} if $B = A^T$, or in other words if $C_1 = C_2$ and $u_1(s_1,s_2) = u_2(s_2,s_1)$ for all $(s_1,s_2)\in C$.  For a symmetric bimatrix game we let $m = \abs{C_1}$.  When it is notationally convenient to do so, we will assume without loss of generality that $C_1=\{1,\ldots,m\}$.
\end{definition}

We make the standard assumptions that the number of players, strategy sets, and utility functions are common knowledge: everyone knows them, everyone knows that everyone knows them, and so on (see e.g.\ \cite{ft:gt}).  For symmetric bimatrix games we make the additional framing assumptions that each of a player's strategies has a distinct label, the same set of labels is used by both players, and this shared labeling is common knowledge.  In other words there is a commonly known bijection between the strategy sets of the players.  We will always take these sets to be equal and the distinguished bijection to be the identity.  Changing this bijection has a profound effect on the game as we see in Examples~\ref{ex:coord} and~\ref{ex:anticoord}.  For further discussion of symmetry and framing see \cite{ak:hsfp,c:fpfsf,s:tfp}.

\begin{definition}
A \textbf{mixed strategy} for player $i$ is a probability distribution over his pure strategy set $C_i$, and the set of mixed strategies for player $i$ is $\Delta(C_i)$.  The set of \textbf{independent distributions} or \textbf{mixed strategy profiles} of the game $\Gamma$ will be denoted\footnote{We write $\Delta^\Pi(\Gamma)$ instead of $\Delta^\Pi(C)$ because $\Gamma$ specifies the product structure on $C$.} $\Delta^\Pi(\Gamma) \eqdef \prod_i \Delta(C_i)$.  For a symmetric bimatrix game $\Gamma$, the set of symmetric product distributions will be denoted
\begin{equation*}
\dsympi(\Gamma) \eqdef \{\pi\times \pi \mid \pi\in\Delta(C_1)\} \subset\Delta^\Pi(\Gamma).
\end{equation*}
\end{definition}

For uniformity of notation we define $\Delta(\Gamma)\eqdef\Delta(C)$.  We extend each player's utility from $C$ to $\Delta(\Gamma)$ linearly, defining $u_i(\pi) = \sum_{s\in C}u_i(s)\pi(s)$ for all $\pi\in\Delta(\Gamma)$.

\begin{definition}
\label{def:nasheq}
A mixed strategy profile $\pi = (\pi_1,\ldots,\pi_n)\in\Delta^\Pi(\Gamma)$ is a \textbf{Nash equilibrium} if $u_i(\pi)\geq u_i(t_i,\pi_{-i})$ for all players $i$ and all $t_i\in C_i$.  The set of Nash equilibria is denoted $\nash(\Gamma)$.
\end{definition}

\begin{definition}
\label{def:correq}
A distribution $\pi\in\Delta(\Gamma)$ is a \textbf{(direct\footnote{Some authors use the term ``canonical'' in place of ``direct''.  We omit such qualifiers until Section~\ref{sec:hiddenvarinterp}, where we will need them to draw a distinction with a more general (``indirect'') notion of correlation.}) correlated equilibrium} if for all players $i$ and all functions $f: C_i\to C_i$
\begin{equation*}
\expect_\pi  \left[u_i(f(s_i),s_{-i}) - u_i(s)\right] \eqdef \sum_{s\in C} \left[u_i(f(s_i),s_{-i})-u_i(s)\right]\pi(s)\leq 0.
\end{equation*}
Equivalently, $\pi$ is a correlated equilibrium if for all $i$ and all $s_i,t_i\in C_i$,
\begin{equation*}
\sum_{s_{-i}\in C_{-i}} \left[u_i(t_i,s_{-i}) - u_i(s)\right]\pi(s)\leq 0.
\end{equation*}
The set of correlated equilibria is denoted $\ce(\Gamma)$.
\end{definition}

The interpretation is that $\pi\in\Delta(\Gamma)$ is a joint distribution of recommended strategies for the $n$ players.  Each player sees only his recommended strategy and $\pi$ is a correlated equilibrium when players are best off following these recommendations rather than deviating.

Viewing $\nash(\Gamma)$ and $\ce(\Gamma)$ as subsets of $\Delta(\Gamma)$, we have $\nash(\Gamma) = \ce(\Gamma)\cap\Delta^\Pi(\Gamma)$.  \textbf{Nash's Theorem} states that Nash equilibria exist, hence so do correlated equilibria \cite{nash:ncg}.  In fact in the same paper Nash also proved a version of this theorem for symmetric games; a special case of this result is that symmetric bimatrix games admit symmetric Nash equilibria as defined below.

\begin{definition}
Let $\Gamma$ be a symmetric bimatrix game.  A \textbf{symmetric Nash equilibrium}\footnote{Note that symmetric equilibrium concepts are defined as usual in terms of robustness to unilateral (and so in a sense, asymmetric) deviations, \emph{not} simultaneous identical deviations.} is a $(\pi_1,\pi_2)\in\nash(\Gamma)$ with $\pi_1 = \pi_2$.  The set of such equilibria is denoted $\nesym(\Gamma)\subseteq\dsympi(\Gamma)$.  We will at times refer to $\pi_1$ itself as a symmetric Nash equilibrium.  A \textbf{symmetric correlated equilibrium} is a $\pi\in\ce(\Gamma)$ such that $\pi(s_1,s_2) = \pi(s_2,s_1)$ for all $s\in C$, and the set of these equilibria is denoted $\cesym(\Gamma)$.
\end{definition}

In symbols, $\nesym(\Gamma) = \ce(\Gamma)\cap\dsympi(\Gamma)$ and $\cesym(\Gamma) = \ce(\Gamma)\cap\dsym(\Gamma)$.  Throughout the remainder of the paper we will be discussing symmetric bimatrix games and symmetric Nash and correlated equilibria; we will occasionally drop the word ``symmetric'' for brevity.

From the definitions we obtain the standard facts:

\begin{proposition}
\label{prop:eqtop}
For any game $\Gamma$ the set $\nash(\Gamma)$ is compact and the set $\ce(\Gamma)$ is a compact convex polytope.  For a symmetric game the same is true of $\nesym(\Gamma)$ and $\cesym(\Gamma)$.
\end{proposition}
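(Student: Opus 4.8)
The plan is to treat the two convex objects ($\ce$ and $\cesym$) and the two merely compact objects ($\nash$ and $\nesym$) separately, in each case reducing everything to elementary facts about the simplex $\Delta(\Gamma)=\Delta(C)$ together with continuity of a product map. The backbone is that $\Delta(\Gamma)$ is itself a compact convex polytope, being cut out of $\Rm^C$ by the finitely many linear constraints $\pi(s)\geq 0$ and $\sum_s \pi(s)=1$.

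First I would dispatch $\ce(\Gamma)$. The crucial point is that each correlated-equilibrium condition in Definition~\ref{def:correq} --- one inequality for every player $i$ and every ordered pair $(s_i,t_i)\in C_i\times C_i$ --- is a single \emph{linear} inequality in the coordinates $\pi(s)$, and there are finitely many of them. Hence $\ce(\Gamma)$ is the intersection of the polytope $\Delta(\Gamma)$ with finitely many closed half-spaces, so it is again a compact convex polytope. For the symmetric analogue I would use the identity $\cesym(\Gamma)=\ce(\Gamma)\cap\dsym(\Gamma)$ recorded in the text: the symmetric distributions $\dsym(\Gamma)$ are cut out of $\Delta(\Gamma)$ by the linear equations $\pi(s_1,s_2)=\pi(s_2,s_1)$, so $\dsym(\Gamma)$ is a polytope, and intersecting it with the polytope $\ce(\Gamma)$ leaves a compact convex polytope.

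The one genuinely non-mechanical part is compactness of $\nash$ and $\nesym$, which in general are neither convex nor polytopes, so only compactness should be claimed. Here I would use the representations $\nash(\Gamma)=\ce(\Gamma)\cap\Delta^\Pi(\Gamma)$ and $\nesym(\Gamma)=\ce(\Gamma)\cap\dsympi(\Gamma)$. The sets $\Delta^\Pi(\Gamma)$ and $\dsympi(\Gamma)$ are continuous images of compact sets: the former is the image of the compact product of simplices $\prod_i\Delta(C_i)$ under the product-distribution map $(\pi_1,\ldots,\pi_n)\mapsto \pi_1\times\cdots\times\pi_n$, and the latter is the image of $\Delta(C_1)$ under $\pi\mapsto\pi\times\pi$. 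Each output coordinate of these maps is a polynomial in the input coordinates, hence continuous, so both images are compact and in particular closed in $\Delta(\Gamma)$. Intersecting a closed set with the compact set $\ce(\Gamma)$ then yields a compact set.

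I expect the only real pitfall --- rather than a difficulty --- to be one of restraint: since the product map is multilinear, the image of the convex simplices need not be convex, so one must claim compactness alone for $\nash$ and $\nesym$ and not accidentally assert that they are polytopes. Everything else is bookkeeping: verifying the finiteness and linearity of the defining conditions for the polytope claims, and verifying continuity of the product maps for the compactness claims.
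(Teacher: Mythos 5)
Your proof is correct and is exactly the standard ``from the definitions'' argument that the paper invokes without writing out: linearity and finiteness of the incentive constraints give the polytope claims for $\ce(\Gamma)$ and $\cesym(\Gamma)$, and closedness of $\Delta^\Pi(\Gamma)$ and $\dsympi(\Gamma)$ (as continuous images of compact sets) intersected with the compact set $\ce(\Gamma)$ gives compactness of $\nash(\Gamma)$ and $\nesym(\Gamma)$. Your cautionary remark about not over-claiming convexity for the Nash sets is also well placed.
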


\section{Basics of exchangeable equilibria}
\label{sec:xebasic}

We use $\Gamma$ to denote a symmetric bimatrix game throughout.

\begin{definition}
\label{def:xecp}
A \textbf{(symmetric) exchangeable equilibrium} of $\Gamma$ is a correlated equilibrium such that the players' actions are conditionally i.i.d.  The set of such equilibria is denoted
\begin{equation*}
\xesym(\Gamma)\eqdef \ce(\Gamma)\cap\conv\dsym^\Pi(\Gamma).
\end{equation*}
The text definition agrees with the equation by Proposition~\ref{prop:excp}.  In this paper we leave a notion of ``asymmetric exchangeable equilibrium'' undefined, so we are free to drop the word ``symmetric.''
\end{definition}

Here we study geometric properties of exchangeable equilibria and relationships with symmetric Nash and correlated equilibria.  Applying the equivalent characterizations of conditionally i.i.d.\ sequences presented in Section~\ref{sec:probdistexch} yields equivalent characterizations of exchangeable equilibria; we explore interpretations in Section~\ref{sec:interp}.  We compute exchangeable equilibria of examples in Section~\ref{sec:xeexamples}, which is largely independent of Section~\ref{sec:interp}.

\begin{proposition}
The set $\xesym(\Gamma)$ is compact, convex, and semialgebraic.
\end{proposition}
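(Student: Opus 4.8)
The plan is to exhibit $\xesym(\Gamma)$ as the intersection of two sets, each of which is already known to enjoy all three properties, and then to invoke the fact that compactness, convexity, and semialgebraicity are all stable under finite intersection. From Definition~\ref{def:xecp} we have $\xesym(\Gamma) = \ce(\Gamma) \cap \conv\dsympi(\Gamma)$, so the entire argument reduces to verifying the three properties for each of the two factors and then combining them.

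First I would handle $\ce(\Gamma)$. By Proposition~\ref{prop:eqtop} this is a compact convex polytope. A polytope is cut out by finitely many linear (hence polynomial) inequalities, so it is in particular semialgebraic; thus $\ce(\Gamma)$ is simultaneously compact, convex, and semialgebraic. Second, I would identify $\conv\dsympi(\Gamma)$ with the set of conditionally i.i.d.\ distributions studied in Proposition~\ref{prop:cpgeo}. Indeed, the elements of $\dsympi(\Gamma)$ are exactly the symmetric product distributions $\pi\times\pi$, whose matrix form is $xx^T$ for the probability vector $x$ representing $\pi$; these are precisely the i.i.d.\ distributions whose convex hull is described in Proposition~\ref{prop:cpgeo} (and which, by Proposition~\ref{prop:excp}, coincides with the set of all conditionally i.i.d.\ distributions). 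That proposition already asserts this set is compact, convex, and semialgebraic, so no further work is needed for the second factor.

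Finally I would combine the two factors. The intersection of two compact sets is compact and the intersection of two convex sets is convex, both immediate. For the semialgebraic property I would recall that the semialgebraic sets form a Boolean algebra, closed in particular under finite intersection, so the intersection of two semialgebraic sets is again semialgebraic. Hence $\xesym(\Gamma)$ inherits all three properties. I do not expect a genuine obstacle here: the substantive content was carried out in Propositions~\ref{prop:cpgeo} and~\ref{prop:eqtop}, and the only point requiring a moment's care is the bookkeeping identification of $\conv\dsympi(\Gamma)$ with the conditionally i.i.d.\ set, together with the (standard) observation that semialgebraicity survives intersection.
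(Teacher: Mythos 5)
Your proposal is correct and follows essentially the same route as the paper: both write $\xesym(\Gamma) = \ce(\Gamma)\cap\conv\dsympi(\Gamma)$, cite Proposition~\ref{prop:eqtop} for the polytope $\ce(\Gamma)$ and Proposition~\ref{prop:cpgeo} for $\conv\dsympi(\Gamma)$, and conclude by closure of the three properties under intersection. Your version just spells out the bookkeeping (the identification with the conditionally i.i.d.\ set and the Boolean-algebra closure of semialgebraic sets) that the paper leaves implicit.
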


\begin{proof}
The set $\ce(\Gamma)$ is a compact, convex polytope (Proposition \ref{prop:eqtop}) and $\conv\dsym^\Pi(\Gamma)$ is closed, convex, and semialgebraic (Proposition~\ref{prop:cpgeo}).
\end{proof}

\begin{proposition}
\label{prop:nashsubsetxe}
The equilibrium sets are nested:
\begin{equation*}
\conv(\nesym(\Gamma))\subseteq\xesym(\Gamma)\subseteq\cesym(\Gamma).
\end{equation*}
\end{proposition}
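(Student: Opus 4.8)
The plan is to reduce both inclusions to elementary set containments, exploiting the fact that each of the three equilibrium sets is defined as the intersection of $\ce(\Gamma)$ with some convex set of distributions. Concretely, $\nesym(\Gamma) = \ce(\Gamma)\cap\dsympi(\Gamma)$, $\xesym(\Gamma) = \ce(\Gamma)\cap\conv\dsympi(\Gamma)$, and $\cesym(\Gamma) = \ce(\Gamma)\cap\dsym(\Gamma)$. Since the factor $\ce(\Gamma)$ is common to all three, each inclusion should follow once the corresponding inclusion among the \emph{second} factors is in hand, together with a single appeal to convexity.

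For the first inclusion I would begin by showing $\nesym(\Gamma)\subseteq\xesym(\Gamma)$. Every symmetric Nash equilibrium lies in $\ce(\Gamma)$ by definition, and it is a symmetric product distribution $\pi\times\pi$, hence an element of $\dsympi(\Gamma)\subseteq\conv\dsympi(\Gamma)$. Combining these two memberships gives $\nesym(\Gamma)\subseteq\ce(\Gamma)\cap\conv\dsympi(\Gamma) = \xesym(\Gamma)$. Because $\xesym(\Gamma)$ is convex (established in the preceding proposition), passing to convex hulls preserves the containment, so $\conv(\nesym(\Gamma))\subseteq\xesym(\Gamma)$.

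For the second inclusion, since $\xesym(\Gamma)$ and $\cesym(\Gamma)$ share the factor $\ce(\Gamma)$, it suffices to prove $\conv\dsympi(\Gamma)\subseteq\dsym(\Gamma)$. Each generator $\pi\times\pi$ of $\dsympi(\Gamma)$ corresponds to the matrix $xx^T$, which is symmetric; a distribution on $T^2$ whose matrix is symmetric is exactly a $2$-exchangeable distribution, so $\dsympi(\Gamma)\subseteq\dsym(\Gamma)$. The set $\dsym(\Gamma)$ is cut out by the linear equations $\pi(s_1,s_2)=\pi(s_2,s_1)$ and is therefore convex, whence it contains the convex hull $\conv\dsympi(\Gamma)$. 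Intersecting with $\ce(\Gamma)$ yields $\xesym(\Gamma)\subseteq\cesym(\Gamma)$.

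I do not anticipate a genuine obstacle here; the argument is essentially bookkeeping with the defining intersections. The one point warranting care is the bridge between the probabilistic description of $\xesym(\Gamma)$ (conditionally i.i.d.\ correlated equilibria) and the geometric one ($\ce(\Gamma)\cap\conv\dsympi(\Gamma)$); this is precisely the content of Proposition~\ref{prop:excp}, which I would invoke to justify working throughout with the finite-parameter form $\sum_i\lambda_i x_ix_i^T$. The only other subtlety is to remember that the first inclusion uses convexity of $\xesym(\Gamma)$, rather than attempting to show that $\conv\dsympi(\Gamma)$ directly contains $\conv(\nesym(\Gamma))$, which would wrongly conflate the two factors of the intersection.
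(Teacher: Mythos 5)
Your proposal is correct and follows essentially the same route as the paper: symmetric Nash equilibria are i.i.d.\ correlated equilibria hence exchangeable, convexity of $\xesym(\Gamma)$ handles the convex hull, and conditionally i.i.d.\ distributions are symmetric so $\xesym(\Gamma)\subseteq\cesym(\Gamma)$. The paper's proof is just a terser rendering of the same bookkeeping.
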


\begin{proof}
A symmetric Nash equilibrium is an i.i.d.\ correlated equilibrium, so it is an exchangeable equilibrium.  The set of exchangeable equilibria is convex, so it contains all mixtures of these.  Exchangeable equilibria are symmetric correlated equilibria by definition.
\end{proof}

The set of symmetric correlated equilibria is always a compact convex polytope and the convex hull of the symmetric Nash equilibria is generically so, since there are generically finitely many symmetric Nash equilibria.  The set of exchangeable equilibria need not be polyhedral:  see Example~\ref{ex:exeqsep} below.  It can be shown that this set is always semialgebraic, i.e., described by finitely many polynomial inequalities (proof omitted).

\begin{theorem}
\label{thm:ratxe}
There exists a rational\footnote{Here we mean that the coordinates are rational in the algebraic sense, i.e., ratios of integers.  By definition exchangeable equilibria are rational in the game-theoretic incentive-compatibility sense.} exchangeable equilibrium: $\xesym(\Gamma)\cap\Qm^{m\times m}\neq\emptyset$.
\end{theorem}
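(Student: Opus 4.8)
The plan is to reduce the statement to producing a single \emph{rational} symmetric Nash equilibrium. Write $A\in\Rm^{m\times m}$ for the payoff matrix ($A_{ij}=u_1(i,j)$); I will use that its entries are rational, which is genuinely needed here, since for an irrational $2\times 2$ game the unique symmetric equilibrium --- and hence by Theorem~\ref{thm:2by2} the whole set $\xesym(\Gamma)$ --- can be a single irrational point. If $x\in\Delta(C_1)\cap\Qm^{m}$ is a symmetric Nash equilibrium, then the rank-one matrix $xx^{T}$ represents the product distribution $x\times x\in\dsympi(\Gamma)$, has rational entries, and is itself a symmetric Nash equilibrium; by Proposition~\ref{prop:nashsubsetxe} it therefore lies in $\xesym(\Gamma)$. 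Thus it suffices to exhibit one rational symmetric Nash equilibrium.

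First I would invoke Nash's theorem in its symmetric form (quoted in Section~\ref{sec:games}) to obtain \emph{some} symmetric Nash equilibrium $\pi\in\nesym(\Gamma)$, possibly irrational. Let $S\eqdef\supp(\pi)\subseteq\{1,\dots,m\}$ and let $v$ be the common best-response value $(A\pi)_j$, which is equal across all $j\in S$ precisely because $\pi$ is an equilibrium supported on $S$. The point is that the combinatorial type of $\pi$ --- its support together with the identity of its best responses --- is captured by linear data with rational coefficients. Concretely, consider the polyhedron $P$ of pairs $(x,w)\in\Rm^{m}\times\Rm$ with
\begin{equation*}
x\geq 0,\qquad \sum_j x_j=1,\qquad x_j=0\ (j\notin S),\qquad (Ax)_j=w\ (j\in S),\qquad (Ax)_j\leq w\ (j\notin S).
\end{equation*}
All coefficients are rational, and $(\pi,v)\in P$, so $P$ is a nonempty rational polyhedron.

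The final step is to extract a rational point of $P$. Here I would use the standard fact that a consistent finite system of linear inequalities with rational coefficients has a rational solution: Fourier--Motzkin elimination uses only rational arithmetic, so back-substitution produces rational coordinates. Let $(x,w)\in P\cap(\Qm^{m}\times\Qm)$ be such a point. Then $x$ is a probability vector, every strategy earns at most $w$ against $x$, and every strategy in $S\supseteq\supp(x)$ earns exactly $w$; hence every strategy in $\supp(x)$ is a best response to $x$, so $x$ is a rational symmetric Nash equilibrium and $xx^{T}\in\xesym(\Gamma)\cap\Qm^{m\times m}$, as required. The one subtlety --- and the main thing to get right --- is that the rational point may have \emph{strictly smaller} support than $\pi$ (some $x_j$ with $j\in S$ may vanish). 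This is harmless, since the displayed constraints still force all of $\supp(x)$ to be best responses; it is precisely why I impose $\supp(x)\subseteq S$ rather than equality and then verify the equilibrium property directly instead of transporting it from $\pi$.
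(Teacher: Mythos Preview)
Your proof is correct and follows essentially the same approach as the paper: obtain a symmetric Nash equilibrium via Nash's theorem, argue that a rational one exists because it can be found inside a rational polyhedron, and then take the outer product $xx^T$. The paper compresses the polyhedral step into a citation (taking $w$ as an extreme point of a rational polyhedron, by a symmetric version of the argument in \cite{m:epbg}), whereas you spell it out explicitly using the support of the given equilibrium and Fourier--Motzkin elimination; your observation that the rational solution may have strictly smaller support than $\pi$, and that this is harmless, is a detail the paper leaves implicit.
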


\begin{proof}
Nash's Theorem \cite{nash:ncg} gives a symmetric Nash equilibrium $(w,w)$.  It is known that $w$ can be taken to be in $\Qm^m$.  For example we can take $w$ to be an extreme point of a certain polyhedron defined by linear inequalities with rational coefficients, by a symmetric version of the argument in \cite{m:epbg}.  Then $ww^T\in\xesym(\Gamma)\cap\Qm^{m\times m}$.
\end{proof}

The existence of exchangeable equilibria can be proven, without invoking Nash's Theorem or fixed point results, by extending the methods of \cite{hs:ece}.  We defer details to Part~II of this paper for two reasons.  First, the proofs go through in the more general context treated there.  Second, it is not clear whether such methods can prove the existence of a \emph{rational} exchangeable equilibrium, a result which we show in Part~II to be specific to the two-player case.

While symmetric games always admit symmetric Nash equilibria, asymmetric Nash equilibria can also appear (e.g.\ in Chicken and Examples~\ref{ex:anticoord} and~\ref{ex:exeqsep} below).  In games with asymmetric equilibria the set of symmetric correlated equilibria always strictly contains the exchangeable equilibria:

\begin{theorem}
\label{thm:asymnashstrict}
If $\nesym(\Gamma)\subsetneq\nash(\Gamma)$ then $\xesym(\Gamma)\subsetneq\cesym(\Gamma)$.
\end{theorem}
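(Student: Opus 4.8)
The plan is to use the hypothesis to manufacture an explicit symmetric correlated equilibrium that fails to be positive semidefinite, and then invoke the forward direction of Theorem~\ref{thm:cpvsdnn} to conclude it cannot be exchangeable. Since Proposition~\ref{prop:nashsubsetxe} already gives $\xesym(\Gamma)\subseteq\cesym(\Gamma)$, exhibiting a single point of $\cesym(\Gamma)\setminus\xesym(\Gamma)$ suffices.

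First I would take the asymmetric Nash equilibrium guaranteed by $\nesym(\Gamma)\subsetneq\nash(\Gamma)$: a pair $(\pi_1,\pi_2)\in\nash(\Gamma)$ with $\pi_1\neq\pi_2$. Because the game is symmetric ($u_1(s_1,s_2)=u_2(s_2,s_1)$), swapping the two players' strategies shows $(\pi_2,\pi_1)\in\nash(\Gamma)$ as well; this is a routine check matching up the two Nash inequalities. The product distributions of these equilibria have matrix forms $\pi_1\pi_2^T$ and $\pi_2\pi_1^T$, and both lie in $\ce(\Gamma)$ since Nash equilibria are correlated equilibria. Averaging, set
\[
M\eqdef\tfrac12\left(\pi_1\pi_2^T+\pi_2\pi_1^T\right).
\]
Then $M$ is symmetric and elementwise nonnegative with entries summing to one, and it is a correlated equilibrium because $\ce(\Gamma)$ is convex (Proposition~\ref{prop:eqtop}); hence $M\in\cesym(\Gamma)$.

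It remains to show $M\notin\xesym(\Gamma)$, and the key observation is that $M$ is not positive semidefinite. For any vector $v$ one computes $v^TMv=(v^T\pi_1)(v^T\pi_2)$, since the two cross terms coincide. Distinct probability vectors are linearly independent (if $\pi_2=c\pi_1$, summing entries forces $c=1$), so the linear map $v\mapsto(v^T\pi_1,v^T\pi_2)$ is surjective onto $\Rm^2$; choosing $v$ with $v^T\pi_1=1$ and $v^T\pi_2=-1$ gives $v^TMv=-1<0$. Thus $M$ is not positive semidefinite, hence not doubly nonnegative, and by the forward direction of Theorem~\ref{thm:cpvsdnn} it is not the joint distribution of any conditionally i.i.d.\ pair. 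Therefore $M\notin\xesym(\Gamma)$, yielding $\xesym(\Gamma)\subsetneq\cesym(\Gamma)$. The argument is short and essentially computational; the only place the symmetry hypothesis on $\Gamma$ is really used is in verifying that $(\pi_2,\pi_1)$ is again a Nash equilibrium, so that the symmetrized $M$ remains a correlated equilibrium, and that small check is where I would be most careful.
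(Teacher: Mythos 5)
Your proposal is correct and follows essentially the same route as the paper's proof: symmetrize an asymmetric Nash equilibrium into $M=\tfrac12(\pi_1\pi_2^T+\pi_2\pi_1^T)\in\cesym(\Gamma)$ and show it is not positive semidefinite via $v^TMv=(v^T\pi_1)(v^T\pi_2)<0$, then rule out exchangeability by the forward direction of Theorem~\ref{thm:cpvsdnn}. The two details you spell out --- that $(\pi_2,\pi_1)$ is again a Nash equilibrium by the symmetry of $\Gamma$, and that distinct probability vectors are linearly independent so a suitable $v$ exists --- are exactly the steps the paper leaves implicit, and both checks are sound.
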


\begin{proof}
Let $(x,y)\in\nash(\Gamma)$ with $x\neq y$.  Both $xy^T$ and $yx^T$ are correlated equilibria, so $W\eqdef\frac{1}{2}(xy^T + yx^T)\in\cesym(\Gamma)$.  Since $x,y\in\Delta(C_1)$ are distinct, there is a $z$ with $z^Tx < 0$ and $z^T y > 0$.  This means $W$ is not positive semidefinite so not conditionally i.i.d.\ (Theorem~\ref{thm:cpvsdnn}):
\begin{equation*}
z^TWz = \frac{1}{2}(z^Txy^Tz + z^Tyx^T z) = (z^Tx)(z^Ty)<0.\qedhere
\end{equation*}
\end{proof}

In general exchangeable equilibria need not be mixtures of symmetric Nash equilibria (Example~\ref{ex:exeqsep} below), but for two-strategy symmetric bimatrix games this is always the case:

\begin{theorem}
\label{thm:2by2}
If $\Gamma$ has $m=2$ strategies per player then $\conv(\nesym(\Gamma)) = \xesym(\Gamma)$.
\end{theorem}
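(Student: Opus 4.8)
The inclusion $\conv(\nesym(\Gamma))\subseteq\xesym(\Gamma)$ is already supplied by Proposition~\ref{prop:nashsubsetxe}, so the plan is to prove the reverse inclusion by showing that \emph{every extreme point} of the compact convex set $\xesym(\Gamma)$ is a symmetric Nash equilibrium; since a compact convex set in finite dimensions is the convex hull of its extreme points, this gives $\xesym(\Gamma)\subseteq\conv(\nesym(\Gamma))$ and hence equality. First I would coordinatize a symmetric $2\times 2$ distribution as $\left[\begin{smallmatrix} p & q \\ q & r\end{smallmatrix}\right]$ with $p,q,r\geq 0$ and $p+2q+r=1$. Writing out the correlated-equilibrium inequalities in this case collapses them to two linear conditions, say $(\mathrm I)$ and $(\mathrm{II})$, whose boundary hyperplanes are $c_1p+c_2q=0$ and $c_1q+c_2r=0$ for suitable payoff differences $c_1,c_2$ of $\Gamma$; thus $\cesym(\Gamma)$ is a polytope. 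By Theorem~\ref{thm:cpvsdnn} (the case $m\leq 4$) the conditionally i.i.d.\ condition here is exactly double nonnegativity, i.e.\ positive semidefiniteness $q^2\leq pr$ layered on top of the nonnegativity and normalization already present, so $\xesym(\Gamma)=\cesym(\Gamma)\cap\{q^2\leq pr\}$.

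Next I would classify the extreme points using the standard fact that if a point of $P\cap S$, a polytope intersected with a closed convex set, lies in the relative interior of $S$, then it is already an extreme point of $P$. Applying this with $S=\{q^2\leq pr\}$, any extreme point $x$ of $\xesym(\Gamma)$ either satisfies $q^2=pr$ or is a vertex of $\cesym(\Gamma)$. In the first case $x$ is a symmetric, elementwise nonnegative, positive semidefinite matrix of rank at most one whose entries sum to one, so $x=ww^T$ for a probability vector $w$; being a symmetric product distribution that is a correlated equilibrium, it is a symmetric Nash equilibrium. It remains to show the second case produces nothing new.

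The crux is to rule out any \emph{strictly} positive definite vertex of $\cesym(\Gamma)$. If $pr-q^2>0$ then $p,r>0$, so the only nonnegativity constraint that can be active is $q\geq 0$, and a genuine zero-dimensional vertex with $p,r>0$ forces both $(\mathrm I)$ and $(\mathrm{II})$ to be tight: pairing $q=0$ with either linear inequality yields linearly dependent constraints once $p,r>0$ (it collapses to the same hyperplane), hence only an edge. But $(\mathrm I)$ and $(\mathrm{II})$ holding with equality says exactly that $\left[\begin{smallmatrix} p & q \\ q & r\end{smallmatrix}\right]\!\left[\begin{smallmatrix} c_1 \\ c_2\end{smallmatrix}\right]=0$, so the nonzero vector $(c_1,c_2)$ lies in the kernel and therefore $pr-q^2=0$, contradicting strict definiteness. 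I expect this determinant/kernel observation to be the one genuinely clever step; everything else is bookkeeping. Consequently every extreme point of $\xesym(\Gamma)$ has $q^2=pr$ and is a symmetric Nash equilibrium, completing the argument.

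Two loose ends I would tidy. Degenerate cases in which $(c_1,c_2)$ has a zero coordinate (or vanishes) are absorbed automatically: a linear inequality that becomes vacuous cannot help define a vertex, so the reasoning is unaffected. As a sanity check (and an alternative, more computational route) one can instead split on the signs of $c_1,c_2$ into the coordination case, the anticoordination case, and the dominant-strategy case; the anticoordination case is the one where $\xesym(\Gamma)$ collapses to the single mixed equilibrium, which drops out of the identity $k_1k_2=1$ for the slopes $k_1=-c_1/c_2$ and $k_2=-c_2/c_1$ forcing $q^2=pr$, recovering the Chicken computation of Section~\ref{sec:preview}.
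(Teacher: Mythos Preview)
Your proposal is correct and follows essentially the same extreme-point strategy as the paper: parametrize by $(p,q,r)$, identify $\xesym$ with $\cesym\cap\{pr\ge q^2\}$ via Theorem~\ref{thm:cpvsdnn}, handle the rank-one boundary $pr=q^2$ as Nash, and rule out any strictly positive definite vertex by showing the two incentive constraints cannot both be tight there. Your kernel formulation---observing that tightness of both incentive constraints reads $W\begin{bmatrix}c_1\\c_2\end{bmatrix}=0$, forcing $\det W=0$---is a slightly cleaner packaging of the paper's step of multiplying the tight constraints to obtain $bc(pr-q^2)=0$, and your absorption of the $q=0$ subcase into the vertex analysis replaces the paper's direct decomposition $W=p\,e_1e_1^T+r\,e_2e_2^T$, but these are presentational rather than substantive differences.
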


\begin{proof}
Let $A = \left[\begin{smallmatrix} x & y \\ z & w\end{smallmatrix}\right]$ be the payoff matrix for the row player, $b \eqdef x-z$ and $c \eqdef w - y$.  An exchangeable equilibrium $W =\left[\begin{smallmatrix} p & q \\ q & r\end{smallmatrix}\right]$ must satisfy the correlated equilibrium constraints and be conditionally i.i.d., which is the same as double nonnegativity per Theorem~\ref{thm:cpvsdnn}.  Altogether the conditions are:
\begin{equation*}
\begin{array}{rl}
pr \geq q^2 & \text{(semidefiniteness),}\\
p,q,r\geq 0 & \text{(nonnegativity),}\\
b p \geq c q & \text{(incentive constraint }\#1\text{),}\\
cr \geq b q & \text{(incentive constraint }\#2\text{), and}\\
p+2q+r = 1 & \text{(normalization).}
\end{array}
\end{equation*}

Let us show that any extreme point $W$ of $\xesym(\Gamma)$ is in $\nesym(\Gamma)$.  If $pr=q^2$ then $\rank(W)=1$, so it is a Nash equilibrium.  If $q=0$ then we can write $W = p\left[\begin{smallmatrix} 1 & 0 \\ 0 & 0\end{smallmatrix}\right] + r\left[\begin{smallmatrix} 0 & 0 \\ 0 & 1\end{smallmatrix}\right]$ as a convex combination of symmetric pure Nash equilibria.  By extremality $p = 0$ or $r= 0$.

The remaining case is that $pr>q^2>0$ so semidefiniteness and nonnegativity are not tight.  All other constraints are linear, so extremality requires tightness and linear independence of as many of these as there are variables, i.e.\ all three.  Multiplying the tight incentive constraints yields $(bc)pr = (bc) q^2$.  But $pr>q^2$, so $b = 0$ or $c=0$.  Then $bp = cq$ and $p,q>0$ give $b=c=0$.  The incentive constraints are trivial, hence linearly dependent, so no extreme points fall into this case.
\end{proof}

\section{Interpretations of exchangeable equilibria}
\label{sec:interp}
We next study various game-theoretic setups in which exchangeable equilibria arise naturally as outcomes of games.  Of course these are all related (some more obviously than others), but each has something different to add to the overall picture.  Taken together, the fact that these interpretations each give rise to the same solution concept leads us to conclude that exchangeable equilibria are natural objects to study.

\subsection{Hidden variable interpretation}
\label{sec:hiddenvarinterp}

In this section we formalize the discussion of hidden variables given in the introduction.  We do so by first fixing terminology for a more explicit notion of correlated equilibrium, then studying these in a symmetric setting.

\subsubsection{Indirect correlated equilibria}
To study hidden variables we consider an equivalent definition of correlated equilibria in which the given game is augmented by adding a correlating device which sends the players pre-play signals.  The framework is illustrated in Figure~\ref{fig:indirectce}.  The results of this subsection are known \cite{a:scrs, a:ceebr} and presented here for comparison with the Section~\ref{sec:symextcorr}.

There is some underlying (random) state of the world which is not assumed to be known to the players.  The information available to each player is an arbitrary function of this state and some private random noise; the state and all the noise signals are assumed independent.  This random noise could for example be measurement noise, or the outcome of some private coin tosses\footnote{See Proposition~\ref{prop:knownstatenash} for an important distinction between measurement noise and private coin tosses and see Example~\ref{ex:exeqsep} for an illustration thereof.} on which a player will base his action -- anything which we wish to explicitly assume the other players cannot access.

For mathematical simplicity we typically think of the state of the world, the noises, and the players' information as all taking values in some finite sets.  If this is not the case we must make some measurability restrictions.  We speak informally and avoid assigning symbols to everything involved to prevent an explosion of notation.

\begin{figure}[tbp]
\begin{center}
\begin{tikzpicture}
[roundnode/.style={circle,draw,text centered}]
\node (State) at (0,3) [roundnode, text width = 2cm] {State of the world};

\node (Add1) at (2.5,6) [roundnode] {``$+_1$''};
\node (Noise1) at (2.5,7.5) {$\text{Noise}_1$};
\node (f1) at (6.3,6) [roundnode] {$f_1$};
\node (u1) at (9.8,6) {\textcolor{white}{HIDEM}};

\node (Addn) at (2.5,0) [roundnode] {``$+_n$''};
\node (Noisen) at (2.5,1.5) {$\text{Noise}_n$};
\node (fn) at (6.3,0) [roundnode] {$f_n$};
\node (un) at (9.8,0) {\textcolor{white}{HIDEM}};

\node (Utilities) [fit = (u1) (un),draw,rounded corners,inner sep=0] {Game};

\draw [->,thick] (State) -- (Add1);
\draw [->,thick] (Noise1) -- (Add1);
\draw [->,thick] (Add1) -- (f1) node [above,midway,text width = 2.5cm,text centered] {Player $1$'s \\ information};
\draw [->,thick] (f1) -- (u1) node [above, midway, text width = 2.5cm, text centered] {Player $1$'s \\ action};
\draw [->,thick] (u1) -- (12.8,6) node [above, midway,text width = 2.5cm, text centered] {Player $1$'s \\ utility};

\draw [->,thick] (State) -- (Addn);
\draw [->,thick] (Noisen) -- (Addn);
\draw [->,thick] (Addn) -- (fn) node [above,midway,text width = 2.5cm,text centered] {Player $n$'s \\ information};
\draw [->,thick] (fn) -- (un) node [above, midway, text width = 2.5cm, text centered] {Player $n$'s \\ action};
\draw [->,thick] (un) -- (12.8,0) node [above, midway,text width = 2.5cm, text centered] {Player $n$'s \\ utility};

\foreach \y in {2.75,3.25,3.75} {
 \foreach \x in {2.5,6.3,12} {
   \fill [black] (\x,\y) circle (1.5 pt);
  }
}

\begin{pgfonlayer}{background}
\node (Hidden) [fit=(State) (Addn) (Noise1),draw=gray,fill=gray!30, rounded corners,label=above:\textcolor{gray}{Unobserved}] {};
\end{pgfonlayer}
\end{tikzpicture}
\caption[Indirect correlated equilibrium information structure]{In a correlation scheme, each player receives noisy information about the state of the world (the ``$+_i$'' indicates the state is being combined with the noise somehow, not necessarily additively) and chooses his action in the game as a function $f_i$ of this information.  If no player can improve his utility by playing a different function of his information, we call all this data (the functions and the information structure together) an indirect correlated equilibrium.}
\label{fig:indirectce}
\end{center}
\end{figure}
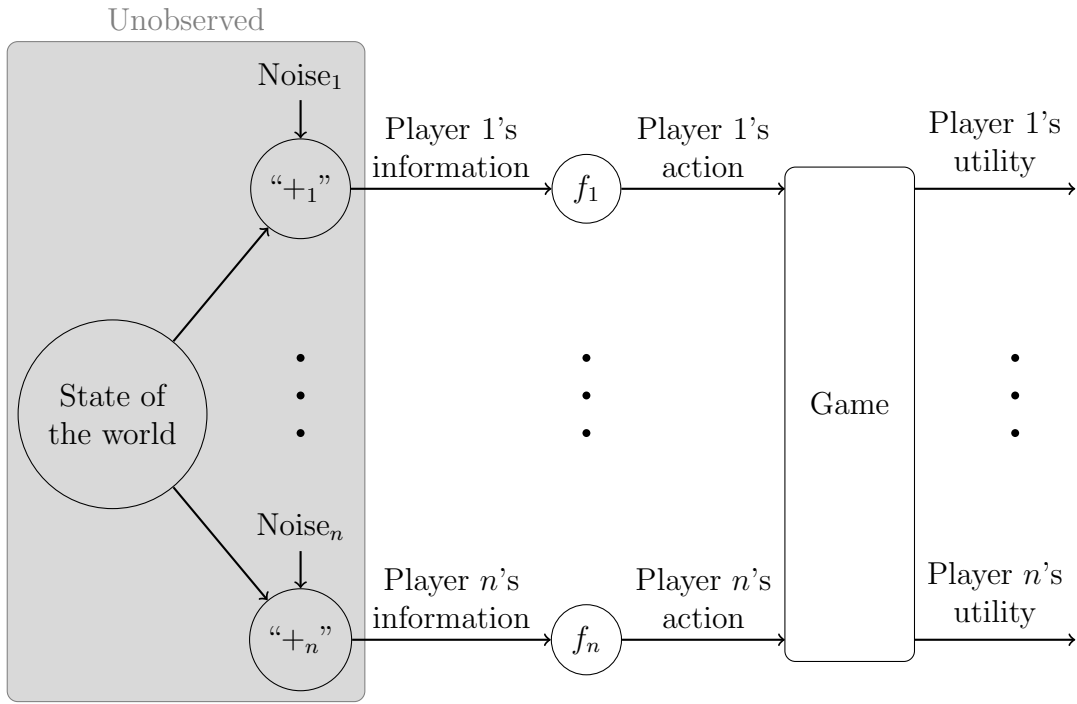

Each player chooses a function $f_i$ mapping his information to actions.  We refer to all the data together --  the state distribution, the noise distributions, the maps from these to information, and the $f_i$ -- as a \textbf{correlation scheme}.  We assume these data are known to the players: only the realizations of the random quantities are hidden.  If no player can improve his expected utility by unilaterally deviating to a different function $f'_i$, we refer to the correlation scheme as an \textbf{indirect correlated equilibrium}.  When we wish to emphasize the distinction we will refer the content of Definition~\ref{def:correq} as a \textbf{direct correlated equilibrium}.

We use the term ``indirect'' inclusively to mean ``not necessarily direct''; the proof of Proposition~\ref{prop:corrintext} shows how to view any direct correlated equilibrium as an indirect correlated equilibrium.  The term ``indirect'' does not seem to be standard, but we adopt it here to have an explicit alternative to ``direct''.

These two notions of correlation were studied by Aumann in \cite{a:scrs} and \cite{a:ceebr}, respectively, and are closely related by the ``revelation principle'' \cite{f:cecg}.  We give a proof here to contrast the machinery with the symmetric case in Section~\ref{sec:symextcorr}.

\begin{proposition}[The Revelation Principle for Correlated Equilibria]
\label{prop:corrintext}
The distribution of actions in an indirect correlated equilibrium is a direct correlated equilibrium and every direct correlated equilibrium arises in this way.
\end{proposition}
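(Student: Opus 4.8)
The plan is to prove the two assertions separately. For the first --- that the action distribution $\pi\in\Delta(\Gamma)$ induced by an indirect correlated equilibrium $(f_1,\dots,f_n)$ is a direct correlated equilibrium --- I would exploit the fact that the family of admissible deviations in the indirect setting is richer than the one appearing in Definition~\ref{def:correq}. Concretely, writing $I_j$ for player $j$'s information, for any map $\phi:C_i\to C_i$ the composition $\phi\circ f_i$ is a legitimate alternative action function for player $i$, so the equilibrium hypothesis gives $\expect[u_i(\phi(f_i(I_i)),f_{-i}(I_{-i}))]\leq\expect[u_i(f_i(I_i),f_{-i}(I_{-i}))]$. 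Rewriting both sides in terms of the pushforward distribution $\pi$ of the action profile $(f_1(I_1),\dots,f_n(I_n))$ turns this inequality into $\sum_{s\in C}[u_i(\phi(s_i),s_{-i})-u_i(s)]\pi(s)\leq 0$, which is exactly the first form of the correlated-equilibrium condition. Since $i$ and $\phi$ were arbitrary, this shows $\pi\in\ce(\Gamma)$.

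For the converse, I would produce an explicit correlation scheme --- the canonical or ``revelation'' scheme --- realizing a given direct correlated equilibrium $\pi$. Take the state of the world to be a single draw $s=(s_1,\dots,s_n)\in C$ distributed according to $\pi$, let all noise signals be trivial, give player $i$ the information $I_i=s_i$ (his own recommended strategy), and let $f_i$ be the identity map on $C_i$. The independence requirements on the scheme are then vacuous, and by construction the induced action distribution is $\pi$ itself. It remains to verify that this scheme is an indirect correlated equilibrium. A deviation for player $i$ is now any function $f'_i:C_i\to C_i$, and its profitability is governed by $\sum_{s\in C}[u_i(f'_i(s_i),s_{-i})-u_i(s)]\pi(s)$, which is nonpositive precisely because $\pi$ satisfies Definition~\ref{def:correq}. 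Hence no deviation helps, the scheme is an indirect correlated equilibrium, and $\pi$ arises from it.

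The conceptual content is light; the main obstacle is bookkeeping rather than mathematics. In particular I would need to handle the deliberately unnamed state, noise, and information spaces with care --- introducing just enough notation (the symbols $I_j$ and the pushforward of $\pi$) to make the change of variables between the indirect expected-utility expressions and the direct summations over $C$ rigorous, while keeping the measurability hypotheses in force so that the pushforward distribution $\pi$ is well defined. It is worth noting that the first direction uses only the sub-family of deviations factoring through $f_i$, so the full strength of the indirect-equilibrium hypothesis --- which permits deviations depending on all of $I_i$, not merely on the chosen action --- is not needed there. Conversely, the canonical scheme is arranged precisely so that each player's information and his action coincide, and it is exactly this coincidence that makes the two notions of deviation match up.
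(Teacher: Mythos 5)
Your proposal is correct and follows essentially the same route as the paper: the forward direction restricts attention to deviations of the form $\phi\circ f_i$ and transfers the resulting inequalities to the pushforward distribution, and the converse uses the canonical scheme in which the state is a profile drawn from $\pi$, each player's information is his own component, and each $f_i$ is the identity. Your writeup is somewhat more explicit about the change of variables, but the mathematical content matches the paper's proof.
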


\begin{proof}
Given an indirect correlated equilibrium, no player can gain by deviating from $f_i$.  In particular no player can gain by deviating to $\zeta_i\circ f_i$ for any $\zeta_i: C_i\to C_i$.  Therefore if we push the $f_i$ back into the unobserved part of the model in Figure~\ref{fig:indirectce}, merging it into the noise-adding stage, the result is an indirect correlated equilibrium in which all players choose the identity function from their information to their action.  This coincides with the definition of what it means for the distribution over information / actions to be a direct correlated equilibrium.

Conversely, given a $\pi\in\Delta(\Gamma)$ we can design a correlation scheme in which the state of the world is a random strategy profile distributed according to $\pi$, each player's information is equal to his personal component of this strategy profile (so the ``noise adding'' step just strips away the other players' choices of strategy -- no additional randomness is needed), and each $f_i$ is the identity on $C_i$.  By the definitions this is an indirect correlated equilibrium if and only if $\pi$ is a direct correlated equilibrium.
\end{proof}

For a given correlation scheme we say that a player \textbf{knows the state of the world} if the state of the world is a function of (measurable with respect to) his information.  This means that player knows all relevant information which he can know in theory, but not the outcomes of the other players' private coin tosses.  We have built some redundancy into the model in the sense that we could have chosen to push all the random noise into the state of the world, making each player's information a deterministic function of the state.  However, the resulting model would not allow for private coin tosses.  Compare the following statements:

\begin{proposition}
\label{prop:knownstatenash}
In an indirect correlated equilibrium, if each player knows the state of the world then the outcome conditioned on the state is a Nash equilibrium almost surely, and every Nash equilibrium arises in this way.
\end{proposition}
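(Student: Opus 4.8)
The plan is to prove the two directions separately; the forward implication carries all the content, the converse being a short construction. Throughout I would use the structure of a correlation scheme (Figure~\ref{fig:indirectce}): player $i$'s action is $f_i$ applied to his information, his information is some function of the state $\omega$ and his private noise $\eta_i$, and the state and the noises $\eta_1,\ldots,\eta_n$ are mutually independent. The first step of the forward direction is to identify the conditional law of the outcome given the state. Since each player knows the state of the world, $\omega$ is measurable with respect to player $i$'s information, so player $i$'s action is a function of $\omega$ and $\eta_i$ alone. Fixing $\omega$, the only remaining randomness in the profile is the independent collection $\eta_1,\ldots,\eta_n$; hence conditioned on $\omega$ the players' actions are independent. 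In other words the conditional distribution of the outcome given $\omega$ is a product distribution, i.e.\ a mixed strategy profile $\sigma(\omega) = (\sigma_1(\omega),\ldots,\sigma_n(\omega))\in\Delta^\Pi(\Gamma)$, where $\sigma_i(\omega)$ is the law of player $i$'s action given $\omega$. This conditional independence — absent in the general correlated case — is precisely what forces the outcome to be a genuine mixed profile rather than merely a correlated one.

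The second step is to show $\sigma(\omega)$ satisfies the Nash inequalities of Definition~\ref{def:nasheq} for almost every $\omega$. I would argue by contradiction: if not, then since there are only finitely many pairs $(i,t_i)$ with $t_i\in C_i$, a positive-probability failure of the (finitely many) Nash inequalities forces some single one to fail on a positive-probability set
\[ E \eqdef \{\omega : u_i(t_i,\sigma_{-i}(\omega)) > u_i(\sigma(\omega))\}, \]
which is measurable because $\sigma(\cdot)$ is a measurable function of $\omega$. Because player $i$ knows the state, the indicator of $E$ is a function of his information, so he may deviate to the function $f_i'$ that plays $t_i$ whenever $\omega\in E$ and agrees with $f_i$ otherwise. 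On $E$ this strictly increases his conditional expected payoff while changing nothing off $E$, so it strictly increases his overall expected payoff, contradicting the assumption that the scheme is an indirect correlated equilibrium. Hence $\sigma(\omega)\in\nash(\Gamma)$ almost surely.

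For the converse, given any $(\sigma_1,\ldots,\sigma_n)\in\nash(\Gamma)$ I would build a correlation scheme with a deterministic (single-point) state of the world and let each player's private noise $\eta_i$ be used to sample an action from $\sigma_i$, with $f_i$ reading off that action. Each player trivially knows the constant state, and conditioned on it the outcome is exactly the product $\sigma$, a Nash equilibrium. This scheme is an indirect correlated equilibrium: since player $i$'s information is only his own noise, any deviation $f_i'$ induces some mixed strategy $\sigma_i'$, and $u_i(\sigma_i',\sigma_{-i})\leq u_i(\sigma_i,\sigma_{-i})$ because $\sigma$ is a Nash equilibrium, so no profitable deviation exists.

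The main obstacle lies entirely in the forward direction, and it is the measurable-selection and conditioning bookkeeping: turning a positive-probability violation of the Nash inequalities into a single, honestly-defined deviating function $f_i'$ of player $i$'s information. The conceptual content is immediate from independence of the noises, so the care is needed only in verifying measurability of $E$ and in justifying that conditioning the deviation on $\{\omega\in E\}$ is legitimate — which is exactly where the hypothesis that the player knows the state is used.
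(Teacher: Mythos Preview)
Your proposal is correct and follows exactly the approach the paper itself sketches: the paper does not give a formal proof but merely remarks that the argument ``amount[s] to little more than repeating the definitions,'' noting that conditional on the commonly known state the players' actions are independent and each is best-replying. Your write-up is a faithful fleshing-out of that sketch; the only remark worth adding is that the paper explicitly restricts to finite state and noise spaces for simplicity, so the measurability bookkeeping you flag as the main obstacle is in fact trivial in the intended setting.
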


\begin{proposition}
In an indirect correlated equilibrium, if each player knows the state of the world and all the noise signals are constant (or equivalently are considered part of the state) then the outcome conditioned on the state is a \emph{pure} Nash equilibrium almost surely, and every pure Nash equilibrium arises in this way.
\end{proposition}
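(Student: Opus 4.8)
The plan is to leverage Proposition~\ref{prop:knownstatenash} and then upgrade its conclusion from Nash equilibrium to \emph{pure} Nash equilibrium by observing that the extra hypothesis --- constant noise --- removes all randomness from each player's action once the state is fixed. For the forward direction I would argue in two stages. First, since each player knows the state of the world, Proposition~\ref{prop:knownstatenash} already tells us that the outcome conditioned on the state is a Nash equilibrium almost surely, so there is no need to redo that analysis. Second, I would show that this Nash equilibrium is in fact a point mass on a single strategy profile. Because all the noise signals are constant, player $i$'s information is a deterministic function of the state, and hence so is the action $f_i$ that player $i$ computes from this information. Thus, conditioned on the state taking a given value, each player plays a single pure strategy deterministically, so the conditional outcome is a pure strategy profile. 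A Nash equilibrium in which every player uses a pure strategy is by definition a pure Nash equilibrium, which completes the forward direction.

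For the converse --- that every pure Nash equilibrium arises this way --- I would exhibit an explicit trivial correlation scheme, mirroring the construction used for the general case but with no randomization. Given a pure Nash equilibrium $a = (a_1, \ldots, a_n) \in C$, take the state of the world to be deterministic (a single point), let all noise signals be constant, let each player's information equal this constant state (so each player trivially knows the state), and let each $f_i$ be the constant function returning $a_i$. The realized outcome is always $a$, which is pure and is trivially the outcome conditioned on the deterministic state. No player can gain by deviating to any $f'_i$, since such a deviation can only replace $a_i$ by some fixed $t_i \in C_i$, and the pure Nash property $u_i(a) \geq u_i(t_i, a_{-i})$ for all $t_i \in C_i$ forbids any improvement. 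Hence the scheme is an indirect correlated equilibrium of the required form.

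I expect the only real subtlety --- rather than a genuine obstacle --- to be the bookkeeping around informational equivalence: one must note that ``constant noise'' is exactly the hypothesis that collapses each player's action to a deterministic function of the state, and that this is logically separate from the hypothesis ``each player knows the state,'' which is what supplies the equilibrium (rather than merely pure) property via Proposition~\ref{prop:knownstatenash}. Disentangling the roles of the two hypotheses, and phrasing the argument at the informal level of the surrounding text (which deliberately avoids assigning symbols to the state, noise, and information), is where care is needed; the mathematics itself is routine once Proposition~\ref{prop:knownstatenash} is in hand.
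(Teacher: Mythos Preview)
Your proposal is correct and matches the paper's approach: the paper does not give a formal proof at all but simply remarks that the propositions ``amount to little more than repeating the definitions,'' with the one-line summary that conditioning on a commonly known state yields a Nash equilibrium, ``which must be pure if the players cannot privately randomize.'' Your two-stage forward argument (invoke Proposition~\ref{prop:knownstatenash}, then note that constant noise makes actions deterministic in the state) and your trivial-scheme converse are exactly the content the paper leaves implicit.
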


The proofs of these propositions amount to little more than repeating the definitions.  The main idea is that if the players commonly know the state, then conditioned on this knowledge their play is independent and each is best replying to his opponent.  That is to say, the outcome is conditionally a Nash equilibrium, which must be pure if the players cannot privately randomize.

\subsubsection{Fully symmetric indirect correlated equilibria}
\label{sec:symextcorr}
In the introduction we assumed that the players are copies of an identical decision-making agent.  Thus players perform identical experiments and flip identical coins, interpreting and acting on the results in the same way.  Randomness in these outcomes is the one way in which differences in the players' actions arise.  Stated precisely:

\begin{definition}
We say that a correlation scheme is \textbf{fully symmetric}\footnote{We use the term ``fully symmetric'' in place of simply ``symmetric'' to distinguish from arbitrary correlation schemes which result in symmetric distributions over actions.  In particular, we rule out cases in which the state of the world is able to introduce asymmetries between the players.} if all players have the same noise distribution, mapping from state and noise to information, and mapping $f_i$ from information to action.  We say that an indirect correlated equilibrium is \textbf{fully symmetric} if it arises from a fully symmetric correlation scheme.
\end{definition}

The notion of correlation schemes is well-studied \cite{a:scrs}, but to the authors' knowledge this fully symmetric version has not been considered.  Under a fully symmetric correlation scheme, the players' noise distributions are i.i.d., so their information, and hence actions, are i.i.d.\ conditioned on the state.  We immediately obtain the expected extension of Proposition~\ref{prop:knownstatenash}:

\begin{proposition}
\label{prop:knownstatesymnash}
In a fully symmetric indirect correlated equilibrium, if each player knows the state of the world then the outcome conditioned on the state is a symmetric Nash equilibrium almost surely, and every symmetric Nash equilibrium arises in this way.
\end{proposition}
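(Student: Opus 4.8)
The plan is to reduce to the asymmetric version already established in Proposition~\ref{prop:knownstatenash} and then use full symmetry to upgrade ``Nash'' to ``symmetric Nash.'' For the forward direction, I would start with a fully symmetric indirect correlated equilibrium in which each player knows the state of the world. Since such a scheme is in particular an ordinary indirect correlated equilibrium, Proposition~\ref{prop:knownstatenash} immediately tells us that, conditioned on the state, the induced distribution over action profiles is a Nash equilibrium almost surely. It then remains only to check that this conditional Nash equilibrium is \emph{symmetric}, i.e.\ of the form $\pi_s\times\pi_s\in\dsympi(\Gamma)$ with identical marginals.

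This is exactly where full symmetry enters. By definition all players share the same noise distribution, the same map from (state, noise) to information, and the same action map $f_i=f$. Fixing a value $s$ of the state---which each player knows---each player's action is $f$ applied to the information obtained from $s$ together with that player's private noise. Because the noises are i.i.d.\ and are processed through identical maps, the resulting conditional action distributions are independent and share a common marginal $\pi_s$, so the conditional profile distribution is the symmetric product $\pi_s\times\pi_s$. Combining this with the Nash property supplied by Proposition~\ref{prop:knownstatenash} gives that the conditional outcome lies in $\nesym(\Gamma)$ almost surely.

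For the converse I would construct the obvious fully symmetric scheme from a given symmetric Nash equilibrium $(\pi,\pi)$: take a trivial (deterministic) state, give each player an independent private noise variable with a distribution common to all players, and let the single shared action map $f$ turn that noise into a draw from $\pi$. Each player trivially knows the constant state, and by construction the actions are i.i.d.\ with marginal $\pi$, so the induced action distribution is $\pi\times\pi$. Since $(\pi,\pi)\in\nesym(\Gamma)$, no player can gain by a unilateral deviation to a different function of his information, so the scheme is a fully symmetric indirect correlated equilibrium, and conditioning on the state reproduces $(\pi,\pi)$.

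The argument is essentially bookkeeping, so I do not expect a genuinely hard step; the text itself signals that this is an ``immediate'' extension. The one point deserving care is verifying that full symmetry forces the conditional marginals to coincide \emph{across players}, so that the conditional Nash equilibrium from Proposition~\ref{prop:knownstatenash} is honestly symmetric rather than merely symmetric in distribution. Here one must use that the private noises are not just identically distributed but also independent and passed through identical maps, so that conditioning on the known state leaves a genuine product of identical factors. For the converse one should simply double-check that the constructed scheme satisfies every clause of the definition of ``fully symmetric'' and that ``each player knows the state'' holds (trivially) for a deterministic state.
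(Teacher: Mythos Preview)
Your proposal is correct and mirrors the paper's own reasoning: the paper does not give a separate proof but simply notes that under a fully symmetric scheme the players' actions are i.i.d.\ conditioned on the state and then says the result is ``the expected extension of Proposition~\ref{prop:knownstatenash}.'' Your argument---invoke Proposition~\ref{prop:knownstatenash} for the Nash property, use the i.i.d.\ conditional structure forced by full symmetry to get equal marginals, and construct the obvious trivial-state scheme for the converse---is exactly this extension spelled out.
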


On the other hand the generalization of Proposition~\ref{prop:corrintext} looks a bit different:

\begin{proposition}[The Revelation Principle for Exchangeable Equilibria]
\label{prop:symcorrintext}
The distribution of actions in a fully symmetric indirect correlated equilibrium is an exchangeable equilibrium and every exchangeable equilibrium arises in this way.
\end{proposition}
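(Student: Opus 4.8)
The plan is to prove the two directions separately, in each case leaning on the general revelation principle (Proposition~\ref{prop:corrintext}) for the incentive / correlated-equilibrium content and on the conditionally i.i.d.\ structure (Proposition~\ref{prop:excp} together with the definition of $\xesym(\Gamma)$) for the exchangeability content.

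For the forward direction, suppose we are given a fully symmetric indirect correlated equilibrium. Since it is in particular an indirect correlated equilibrium, Proposition~\ref{prop:corrintext} already tells us that its distribution of actions is a direct correlated equilibrium, i.e.\ lies in $\ce(\Gamma)$. It remains to see that this distribution lies in $\conv\dsym^\Pi(\Gamma)$. This is where full symmetry enters: the players share a common noise distribution, a common map from state and noise to information, and a common action map $f$. Hence, conditioned on the state of the world, each player's action is the same fixed function of the state and of an i.i.d.\ private noise variable, so the players' actions are i.i.d.\ given the state. By Proposition~\ref{prop:excp} this is exactly what it means for the action distribution to lie in $\conv\dsym^\Pi(\Gamma)$. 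Combining, the action distribution lies in $\ce(\Gamma)\cap\conv\dsym^\Pi(\Gamma)=\xesym(\Gamma)$, as desired.

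For the converse, I would take an arbitrary $W\in\xesym(\Gamma)$ and, by Proposition~\ref{prop:excp}, write $W=\sum_{i=1}^k\lambda_i x_i x_i^T$ with $\lambda_i>0$, $\sum_i\lambda_i=1$, and $x_i\in\Delta(C_1)$. The scheme I would build mirrors the de Finetti picture: let the (unobserved) state of the world be the parameter $\Theta$ taking the value $x_i$ with probability $\lambda_i$; give each player an independent private noise variable; and let the common information map use this noise to draw a single recommended action from the distribution $\Theta$. Each player's information is then just this privately sampled recommendation, and the common action map $f$ is the identity. By construction the resulting joint action distribution is $\sum_i\lambda_i x_i x_i^T=W$, and the scheme is fully symmetric since the noise distributions, the information map, and $f$ are identical across players.

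The main obstacle, and the one point requiring care, is verifying the incentive (no-deviation) condition, and in particular resisting the temptation to let the players observe the hidden parameter $\Theta$ itself. If a player knew $\Theta=x_i$, he would know his opponent is drawing independently from $x_i$ and could profitably best-respond to $x_i$ unless each $x_i$ were already a symmetric Nash equilibrium; this would wrongly collapse the claim into Proposition~\ref{prop:knownstatesymnash}, and is false in general (Example~\ref{ex:exeqsep}). The resolution is that full symmetry of the scheme does \emph{not} require the players to know the state: each player observes only his own sampled recommendation, a strictly coarser but still symmetric function of $\Theta$ and his noise. With this coarsening the joint distribution of the two recommendations is precisely $W\in\ce(\Gamma)$, and each player sees only his own coordinate of a $W$-distributed profile. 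This is exactly the canonical realization used in the proof of Proposition~\ref{prop:corrintext}, whose incentive constraints are the defining inequalities of the direct correlated equilibrium $W$; hence no deviation $f'$ improves any player's payoff, the scheme is an indirect correlated equilibrium, and it is fully symmetric by construction.
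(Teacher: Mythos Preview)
Your proof is correct and follows essentially the same approach as the paper: the forward direction combines Proposition~\ref{prop:corrintext} with the observation that actions are conditionally i.i.d.\ given the state, and the converse builds the same fully symmetric scheme (state $=$ hidden parameter $\Theta$, information $=$ an i.i.d.\ draw from $x_i$, action map $=$ identity). Your extended discussion of the incentive verification---emphasizing that players see only their own recommendation, not $\Theta$---is a welcome elaboration of what the paper compresses into the single clause ``since this is in particular a direct correlated equilibrium.''
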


\begin{proof}
By Proposition~\ref{prop:corrintext} this distribution is a correlated equilibrium.  It is also i.i.d.\ conditioned on the state, so it is exchangeable.

Conversely, let $\pi$ be any exchangeable equilibrium.  Then we can write $\pi = \sum_{i=1}^k \lambda_i x_ix_i^T$ for some $x_i\in\Delta(C_1)$ and some probability vector $\lambda$.  Construct a fully symmetric correlation scheme with $k$ states of the world chosen according to the distribution $\lambda$.  In state $i$, choose each player's information i.i.d.\ according to $x_i$.  Let the mappings from information to action all be the identity map.  Then the distribution over actions is $\pi$, and since this is in particular a direct correlated equilibrium, the resulting correlation scheme is a fully symmetric indirect correlated equilibrium.
\end{proof}

Under sufficiently strong symmetry assumptions, the widely used equivalence between indirect and direct correlated equilibria breaks down.  Instead, fully symmetric indirect correlated equilibria give rise to exchangeable equilibria.

Thus we conclude that there are games (such as Examples~\ref{ex:anticoord},~\ref{ex:exeqsep}, and~\ref{ex:payoffsep} below, or any symmetric game admitting asymmetric Nash equilibria as shown in Theorem~\ref{thm:asymnashstrict}) admitting symmetric direct correlated equilibria which cannot be realized by any fully symmetric indirect correlated equilibria because they fail to be exchangeable.  Implementing such equilibria requires implicit symmetry-breaking.  On the other hand, some direct correlated equilibria outside the convex hull of the Nash equilibria can be implemented without symmetry-breaking; again see Example~\ref{ex:exeqsep}.  Such exchangeable equilibria can also achieve higher payoffs / social welfare than the symmetric Nash equilibria as in Example~\ref{ex:payoffsep}.

Such symmetry-breaking should not be viewed as pathological.  It occurs whenever the state of the world is taken to include information which distinguishes the players, such as their types or some common knowledge about their past interactions.  Such information can break symmetry in play even when the payoffs are symmetric, resulting in non-exchangeable symmetric correlated equilibria or even asymmetric correlated equilibria (the following subsection gives an example).  As such, fully symmetric correlation schemes -- and so exchangeable equilibria -- arise when players correlate their actions based on features of the external world rather than features of the players themselves.

\subsection{Unknown opponent interpretation}
\label{sec:unknownopp}

As in the previous section on the hidden variable interpretation, here we again focus on games in which the symmetry includes not only the action spaces and payoffs, but also the roles of the players in a wider sense.  Kuzmics and Rogers argue formally for symmetric Nash equilibria in symmetric games when players are ignorant about (the payoff-irrelevant types of) their opponents \cite{kr:iijsesg}.  In this section we show by example that a simple interpretation of ignorance of one's opponent leads to exchangeable equilibria when correlation is allowed.

Consider the anti-coordination game of Example~\ref{ex:anticoord} with its strategies named as in Table~\ref{tab:alicebobgame}.  We will first give two examples of choices of players without this symmetry and then an example with this symmetry.  Note that because of the symmetry of the payoffs, we do not have to select who will be the row player and who will be the column player.  In particular this information is not available to help the players coordinate on an equilibrium.
\begin{table}[tbf]
\begin{center}
\begin{tabular}{c|c|c|}
$(u_{\text{Row}},u_{\text{Column}})$ & My name is Alice & My name is Bob \\
\hline
My name is Alice & $(0,0)$ & $(1,1)$ \\
\hline
My name is Bob & $(1,1)$ & $(0,0)$ \\
\hline
\end{tabular}
\end{center}
\caption{Anti-coordinating based on player identity.}
\label{tab:alicebobgame}
\end{table}

Suppose first that this game were played by two people named Alice and Bob whose names were common knowledge.  It is reasonable to predict that the players would have no trouble coordinating on the pure strategy Nash equilibrium in which each player correctly identifies his own name.  Suppose secondly that this game were played by two friends named Alice.  It may still be possible for them to coordinate based on, say, the common knowledge that one of them has the middle name Roberta and the other does not.

Suppose finally that two arbitrary Alices are selected from the record-breaking crowd at this year's Conference of Game Theorists Named Alice.  They are sequestered in separate rooms and asked to choose the action they would play in this game against an opponent chosen from the same conference who has been given the same information.  It is common knowledge that both players are attending the conference, but no further information is given.  As Bayesian observers, how should we expect them to play?  The standard Bayesian way to capture our ignorance of any distinctions between the possible players is to say that the distribution over outcomes should be symmetric.  Based on the conference title we can assume common knowledge of rationality, so by Aumann's result \cite{a:ceebr} we should expect the Alices to play a correlated equilibrium.  In particular we should not expect both to play ``My name is Alice'' with probability one.  

What else can we say about the outcome?  The best symmetric correlated equilibrium in terms of payoff is $W = \left[\begin{smallmatrix}0 & 1/2 \\ 1/2 & 0\end{smallmatrix}\right]$.  Is this a reasonable solution?

We claim that $W$ would not be played.  Suppose for a contradiction that $W$ were the expected distribution of outcomes and consider asking a third Alice -- recall the record attendance -- what action (``Alice'' or ``Bob'') she would play in the game.  The chosen strategies $A_1$, $A_2$, and $A_3$ of the three Alices would all be random variables, and their joint distribution would be symmetric under arbitrary permutations of the Alices.  In particular all the pairwise distributions of two of these random variables would be given by $W$.  Consulting $W$, we see that the events $E_1 = \{A_2 = A_3\}$, $E_2 = \{A_1 = A_3\}$, and $E_3 = \{A_1 = A_2\}$ would each occur with probability zero, therefore so would their union $E = E_1\cup E_2 \cup E_3$.  But there are only two strategies in the game, so in any realization at least two of the $A_i$ must be equal by the pigeonhole principle.  That is to say, $E$ must occur with probability one regardless of the distribution of the $A_i$, a contradiction.

On the other hand, the mixed strategy Nash equilibrium in which each player randomizes over her actions with equal probability does not suffer from this difficulty: one can create a sequence of i.i.d.\ $\text{Bernoulli}(1/2)$ random variables for any number of Alices.  Such a sequence has a distribution which is exchangeable and has the desired mixed equilibrium as its marginal distribution corresponding to any choice of two players.

Strictly speaking we have only argued that the observed correlated equilibrium should extend to an $N$-exchangeable sequence $A_1,\ldots, A_N$, where $N$ is the number of Alices at the conference.  However, with $N$ assumed to be large, it is more natural to consider equilibria which could be extended to an arbitrary number of attendees.  In particular this ensures that we select only the equilibria which are robust to the exact size of the pool and the players' knowledge thereof.  

Being a bit more formal, we consider situations in which the players are drawn from a pool of $N$ interchangeable agents ($2\leq N\leq \infty$).  We interpret interchangeability as meaning that all $N$ agents choose a hypothetical action for the game, regardless of whether they are selected to play, and the joint distribution of these actions is $N$-exchangeable.  Whichever two agents are chosen to play the game, we assume they choose their actions in their best interests, in the sense that their joint distribution is a correlated equilibrium.  That is to say, the $N$ agents choose their actions according to an $N$-exchangeable equilibrium:

\begin{definition}
\label{def:nexcheq}
For $2\leq N\leq \infty$ an \textbf{$N$-exchangeable equilibrium} of $\Gamma$ is an $N$-exchangeable distribution of random variables $X_i$ such that the marginal distribution of $X_1$ and $X_2$ is a correlated equilibrium of $\Gamma$.  The set of $N$-exchangeable equilibria is denoted $\xesym(\Gamma,N)\subseteq\dsym(C_1^N)$.
\end{definition}

The sets of $N$-exchangeable equilibria shrink as $N$ increases, in the sense that if $N_2>N_1$ and $X_1,\ldots,X_{N_2}$ is an $N_2$-exchangeable equilibrium, then $X_1,\ldots,X_{N_1}$ is an $N_1$-exchangeable equilibrium.  Letting $N$ go to infinity, Theorem~\ref{thm:definetti} (De Finetti's Theorem) gives a sort of convergence\footnote{While $N_2$-exchangeable equilibria induce $N_1$-exchangeable equilibria for $N_2>N_1$, we cannot properly say that the sets are nested as ``$\xesym(\Gamma,N_2)\subseteq\xesym(\Gamma,N_1)$'' because these are distributions over different numbers of random variables.  With the language of \textbf{inverse} (or \textbf{projective}) \textbf{limits} it is possible to make sense of the convergence of this sequence of sets as $N$ goes to infinity and show that the limit, which can be thought of as ``$\bigcap_{N=2}^\infty\xesym(\Gamma,N)$'', is $\xesym(\Gamma,\infty)$.  For details see Part~II.} to exchangeable equilibria:

\begin{corollary}
\label{cor:exeqchar}
The distribution of two random variables $X_1,X_2$ with values in $C_1$ is an exchangeable equilibrium if and only if it is the bivariate marginal distribution of an infinite sequence $X_1,X_2,X_3,\ldots$ which is an $\infty$-exchangeable equilibrium.
\end{corollary}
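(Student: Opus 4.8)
The plan is to prove both implications directly from de Finetti's Theorem (Theorem~\ref{thm:definetti}) together with the structural description of conditionally i.i.d.\ pairs in Proposition~\ref{prop:excp}. Throughout I identify a symmetric product distribution $\pi\times\pi\in\dsympi(\Gamma)$ with the rank-one matrix $\pi\pi^T$, so that $\conv\dsympi(\Gamma)$ is precisely the set of matrices $\sum_i\lambda_i w_iw_i^T$, i.e.\ the joint distributions of conditionally i.i.d.\ pairs. With this identification the definition $\xesym(\Gamma)=\ce(\Gamma)\cap\conv\dsympi(\Gamma)$ becomes the assertion ``correlated equilibrium that is conditionally i.i.d.\ as a pair,'' which is the form I will match against in each direction.

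For the ``if'' direction, suppose the distribution of $(X_1,X_2)$ is the bivariate marginal of an $\infty$-exchangeable equilibrium $X_1,X_2,X_3,\dots$. By Definition~\ref{def:nexcheq} this bivariate marginal is already a correlated equilibrium, so it lies in $\ce(\Gamma)$. Since the full sequence is $\infty$-exchangeable, the hard direction of de Finetti's Theorem makes it conditionally i.i.d., hence the pair $(X_1,X_2)$ is conditionally i.i.d.; by Proposition~\ref{prop:excp} its distribution lies in $\conv\dsympi(\Gamma)$. Intersecting the two memberships gives a point of $\ce(\Gamma)\cap\conv\dsympi(\Gamma)=\xesym(\Gamma)$, as desired.

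For the ``only if'' direction, I start from an exchangeable equilibrium $\mu\in\xesym(\Gamma)$ and build the extending sequence explicitly. By Proposition~\ref{prop:excp} I can write $\mu=\sum_{i=1}^k\lambda_i w_iw_i^T$ with $w_i\in\Delta(C_1)$ and $\lambda$ a probability vector. I let a hidden variable $\Theta$ take the value $i$ with probability $\lambda_i$, and conditioned on $\Theta=i$ I let $X_1,X_2,X_3,\dots$ be i.i.d.\ with law $w_i$. This sequence is conditionally i.i.d., hence $\infty$-exchangeable (the easy direction of de Finetti, noted before Definition~\ref{def:condiid}). Its bivariate marginal is $\sum_i\lambda_i w_iw_i^T=\mu$, which is a correlated equilibrium because $\mu\in\xesym(\Gamma)\subseteq\ce(\Gamma)$, so the constructed sequence is an $\infty$-exchangeable equilibrium with bivariate marginal $\mu$.

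The only care required is bookkeeping: matching the matrix picture $ww^T$ with the product-distribution picture $\pi\times\pi$, and invoking the correct direction of de Finetti in each implication (the hard direction for ``if'', the easy direction for ``only if''). I expect no substantive obstacle, precisely because the finite decomposition supplied by Proposition~\ref{prop:excp} lets me exhibit a finitely-supported mixing variable $\Theta$ and thereby sidestep any measure-theoretic subtlety with infinitely-supported de Finetti mixing measures.
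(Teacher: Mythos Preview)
Your proof is correct and matches the paper's approach: the corollary is stated without proof as an immediate consequence of de Finetti's Theorem, and you have simply spelled out both directions explicitly, invoking Proposition~\ref{prop:excp} to obtain the finite mixing representation in the ``only if'' direction. There is nothing to add.
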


Therefore we may view exchangeable equilibria as exactly those correlated equilibria arising when players are selected from a large pool of interchangeable agents and their identities are unknown to each other.  It makes sense to say that the notion of exchangeable equilibria eliminates those correlated equilibria which are ``unreasonable'' on the grounds of symmetry arguments such as the one made above.  On the other hand, when Alice and Bob play the game in Table~\ref{tab:alicebobgame} these kinds of symmetry arguments do not apply, so it is plausible that they will play an equilibrium which is not exchangeable, perhaps even asymmetric.

\subsection{Many player interpretation}
\label{sec:manyplayerinterp}
Another way to view exchangeable equilibria is as limits of symmetric correlated equilibria of $N$-player games consisting of many identical pairwise interactions\footnote{These games are symmetric versions of polymatrix games \cite{h:epg}.  They are also instances of pairwise-interaction games \cite{dm:pig}, themselves a form of graphical games \cite{kls:gmgt}.}.  Given a symmetric bimatrix game $\Gamma$ we define a symmetric $N$-player game $\npow{\Gamma}$ in which player $i$ has strategy set $\npow{C_i} = C_1$ and utility function
\[
\npow{u_i}(s_1,\ldots,s_N) = \sum_{j\neq i} u_1(s_i,s_j).
\]
That is to say, each player plays $\Gamma$ against each other player, choosing the same strategy in each instance.  If $N=2$ we recover the original game: $\npowk{2}{\Gamma} = \Gamma$.

\savecounter{minority}
\begin{example}
\label{ex:minority}
We give another interpretation of the anti-coordination game $\Gamma$ whose utilities are shown in Table~\ref{tab:alicebobgame}.  The story is different from the one in the previous section, so we simply refer to the strategies as `$A$' and `$B$'.  The utilities of the $N$-player game $\npow{\Gamma}$ are given by
\begin{equation*}
\npow{u}_i(s_1,\ldots,s_N) = \#\{j: s_j\neq s_i\},
\end{equation*}
so each player wants to choose whichever strategy is chosen by fewer of his opponents.  This is a version of ``The Minority Game'' introduced by Challet and Zhang \cite{cz:ecoeg}.  One interpretation is that the $N$ players are choosing between two equally good restaurants, $A$ and $B$, so each player wants to eat at the less crowded restaurant.
\end{example}

The game $\npow{\Gamma}$ is symmetric under arbitrary permutations of the $N$ players, so it is natural to focus on correlated equilibria of this game which are also symmetric, i.e., $N$-exchangeable.  We now show that these are exactly the $N$-exchangeable equilibria of Definition~\ref{def:nexcheq}.  The reason is that the utility for a player in $\npow{\Gamma}$ is summed over all his $N-1$ interactions, so by linearity only the bivariate marginal of an $N$-exchangeable distribution matters to determine whether it is a correlated equilibrium of $\npow{\Gamma}$.  A simple computation then shows that the correct condition on this bivariate marginal is that it be a correlated equilibrium of  $\Gamma$.

\begin{proposition}
\label{prop:nexeqchar}
For finite\footnote{We do not interpret $\infty$-exchangeable equilibria directly as symmetric correlated equilibria of some game to avoid technical difficulties related to games with infinitely many players.} $N$ symmetric correlated equilibria of $\npow{\Gamma}$ are the same as $N$-exchangeable equilibria of $\Gamma$.
\end{proposition}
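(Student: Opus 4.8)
The plan is to show that, for an $N$-exchangeable distribution $\pi\in\dsym(C_1^N)$, the correlated-equilibrium conditions of Definition~\ref{def:correq} applied to $\npow{\Gamma}$ collapse, term by term, to the correlated-equilibrium conditions for the bivariate marginal of $\pi$ in $\Gamma$. Since a symmetric correlated equilibrium of $\npow{\Gamma}$ is by definition an $N$-exchangeable correlated equilibrium of that game, and an $N$-exchangeable equilibrium of $\Gamma$ is an $N$-exchangeable $\pi$ whose bivariate marginal is a correlated equilibrium of $\Gamma$ (Definition~\ref{def:nexcheq}), establishing this equivalence of incentive constraints proves the claim.

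First I would fix a player $i$, a recommended action $s_i$, and a deviation $t_i\in C_1$, and write out the conditional deviation gain for $\npow{\Gamma}$,
\[
\sum_{s_{-i}}\bigl[\npow{u_i}(t_i,s_{-i})-\npow{u_i}(s_i,s_{-i})\bigr]\pi(s).
\]
Substituting $\npow{u_i}(s)=\sum_{j\neq i}u_1(s_i,s_j)$, the bracket becomes $\sum_{j\neq i}\bigl[u_1(t_i,s_j)-u_1(s_i,s_j)\bigr]$, which depends on $s_{-i}$ only through the single coordinate $s_j$. Swapping the order of summation, the whole expression equals a sum over $j\neq i$ of a quantity depending on $\pi$ only through the marginal distribution of the ordered pair $(X_i,X_j)$.

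The key step is then to invoke $N$-exchangeability: for every $i\neq j$ a permutation sends $(i,j)\mapsto(1,2)$, so the marginal of $(X_i,X_j)$ equals the marginal $\mu$ of $(X_1,X_2)$. Hence each of the $N-1$ summands is identical, and the deviation gain equals $(N-1)\sum_{a}\bigl[u_1(t_i,a)-u_1(s_i,a)\bigr]\mu(s_i,a)$. Because $N\geq 2$, dividing by $N-1$ shows that player $i$'s constraint in $\npow{\Gamma}$ holds if and only if the row-player (player $1$) correlated-equilibrium constraint for the pair $(s_i,t_i)$ holds for $\mu$ in $\Gamma$. Moreover every player $i$ in $\npow{\Gamma}$ yields exactly this family of player-$1$ constraints on $\mu$, so all the per-player constraints coincide.

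Finally I would close the loop by recording that $\mu$ inherits symmetry from the exchangeability of $\pi$, so that in the symmetric game $\Gamma$ the player-$1$ constraints and the player-$2$ constraints for $\mu$ are the same inequalities. Therefore ``$\mu$ satisfies the player-$1$ constraints'' is equivalent to ``$\mu$ is a (full) correlated equilibrium of $\Gamma$,'' and both descriptions reduce to precisely the same family of inequalities on $\mu$. I do not expect a genuine obstacle here, as this is the linearity-plus-exchangeability computation the text anticipates; the only points needing care are bookkeeping the ordered-pair marginals correctly under the permutation invariance and explicitly noting that symmetry of $\mu$ is what lets the single row-player condition certify that $\mu$ is a correlated equilibrium of $\Gamma$.
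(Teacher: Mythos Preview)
Your proposal is correct and follows essentially the same linearity-plus-exchangeability computation as the paper: both arguments expand $\npow{u_i}$ as a sum of pairwise utilities, use exchangeability to replace each $(X_i,X_j)$-marginal by the $(X_1,X_2)$-marginal, and obtain the factor $N-1$. The only cosmetic differences are that the paper works with the deviation-function formulation $f:C_1\to C_1$ rather than your $(s_i,t_i)$ formulation, and that you are more explicit than the paper about why the symmetry of $\mu$ ensures the player-$2$ constraints in $\Gamma$ come for free once the player-$1$ constraints hold.
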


\begin{proof}
For any $f: C_1\rightarrow C_1$we have
\begin{align*}
\expect \npow{u}_1(f(X_1),X_2,\ldots, X_N) & = \expect\sum_{i=2}^N u_1(f(X_1),X_i) = \sum_{i=2}^N \expect u_1(f(X_1),X_i) \\ 
& = \sum_{i=2}^N\expect u_1(f(X_1),X_2) = (N-1)\expect u_1(f(X_1),X_2),
\end{align*}
with the same analysis holding for all the other players.  Therefore there exists an $f$ which can improve the payoff of some player in $\npow{\Gamma}$ under recommendations $X_1, \ldots, X_N$ if and only if there exists such an $f$ in $\Gamma$ under recommendations $X_1,X_2$.
%
%
%
\end{proof}

Thus exchangeable equilibria correspond to symmetric correlated equilibria of $N$-player extensions of the game for arbitrary $N$.  Given that the original game was symmetric, it is reasonable to expect that equilibria with a high degree of symmetry would be focal (apt to be chosen over other equilibria because they attract attention by being ``better'' in some way).  This is another reason we might expect players to play an exchangeable equilibrium.

\usesavedcounter{minority}
\begin{example}(cont'd)
Which equilibria should we expect to be played in The Minority Game?  There are an abundance of pure equilibria of $\npow{\Gamma}$; these are exactly the strategy profiles in which $\left\lfloor\frac{N}{2}\right\rfloor$ players choose one restaurant and $\left\lceil\frac{N}{2}\right\rceil$ choose the other.  These all have the disadvantage of not being symmetric in the players.  While such equilibria can be justified with an evolutionary model \cite{cz:ecoeg}, they do not make as much sense in a symmetric one-shot game.

We can symmetrize by constructing a correlated equilibrium which picks one of these pure Nash equilibria uniformly at random.  This yields an $N$-exchangeable equilibrium $\pi^N$.  One can show that if $N$ is odd then $\pi^{N+1}$ is the unique extension of $\pi^N$ to an $(N+1)$-exchangeable distribution (the extra player in $\npowk{N+1}{\Gamma}$ goes to the less populated restaurant).  On the other hand, if $N$ is even then there is no $(N+1)$-exchangeable $\pi\in\dsym(C_1^{N+1})$ extending $\pi^N$.  This means $\pi^N$ cannot be extended to an $(N+2)$-exchangeable sequence for any $N$.

In particular, there is no symmetric way to extend any $\pi^N$ to a correlated equilibrium of all the games $\npowk{k}{\Gamma}$ simultaneously.  Therefore expecting the players to actually play one of the $\pi^N$ means assuming that the value of $N\pm 1$ (depending on the parity of $N$) is common knowledge.  This assumption seems dubious in this model.

We claim it is more natural to look for a solution which is robust to the value of $N$.  This could be accomplished in a number of ways, such as by fixing some value $N$ which is much greater than the actual expected number of players and using $\pi^N$ (which only solves the problem in an approximate sense), or by assuming the players have a probability distribution over possible values of $N$ \cite{m:pupg}.  Exchangeable equilibria are those which are the most robust to the value of $N$; namely, they are correlated equilibria of $\npow{\Gamma}$ for all $N$.

The unique exchangeable equilibrium of the anti-coordination game in Table~\ref{tab:alicebobgame} is the mixed Nash equilibrium (proven in Example~\ref{ex:anticoord} below).  That is to say, the only Bayesian rational strategy of the players in The Minority Game which is symmetric and makes sense regardless of the number of players is for everyone to choose a restaurant uniformly at random.
\end{example}
\restorecounter

Returning to an arbitrary symmetric bimatrix game $\Gamma$, the games $\npow{\Gamma}$ are instances of the \textbf{pairwise-interaction games} studied in \cite{dm:pig}.  These are games in which multiple players play many copies of a symmetric bimatrix game, each player choosing the same action in all copies of this base game in which he participates.  The pairwise-interaction structure is specified by a graph whose nodes are players and whose edges specify which pairs of players compete in the base game; in the case of $\npow{\Gamma}$ this is the complete graph on $N$ vertices. The book chapter \cite{dm:pig} covers problems of computing Nash equilibria given this graph as input; here we are concerned with those symmetric correlated equilibria which persist as the graph grows.

The exchangeable equilibria correspond to symmetric correlated equilibria of $\npow{\Gamma}$ simultaneously for all $N$ and this seems to be robust to how exactly we formulate $\npow{\Gamma}$.  In particular an $N$-exchangeable equilibrium of a symmetric bimatrix game is a correlated equilibrium of the associated pairwise-interaction games for all graphs on $N$ vertices, not just complete graphs.  Furthermore, we could allow the players to consider different deviations for different opponents in this interaction graph, in which case the exchangeable equilibria would correspond exactly to the symmetric correlated equilibria in which the players choose the same action against all opponents.  Proofs of both facts mimic that of Proposition~\ref{prop:nexeqchar}.

\subsection{Sealed envelope implementation}
\label{sec:symenvmachine}

For $1\leq N\leq\infty$ we can assign another interpretation to $\xesym(\Gamma,N)$ in terms of which correlated equilibria can be implemented by a certain type of correlation scheme.  Suppose $\Gamma$ is to be played by two players who have access to a common pile of $N$ indistinguishable sealed envelopes, each of which contains a slip of paper on which an element of $C_1$ is written.  The contents of these envelopes play the role of the recommendations in the formulation of (direct) correlated equilibrium.

Each player is allowed to choose one envelope and base his choice of action in $\Gamma$ on its contents, which he examines privately.  We assume that there is some mechanism in place preventing both players from choosing the same envelope (e.g. a random one of them is required to select a different envelope).

The fact that the envelopes are indistinguishable means that the distribution over their contents should be exchangeable, since we would lose no information by shuffling the envelopes randomly.  This distribution is an $n$-exchangeable equilibrium if and only if it is a Nash equilibrium of this larger game (including the envelope selection process) for both players to choose distinct envelopes in an arbitrary fashion and then play the strategy written inside the envelope they choose.  In particular, they cannot exploit the knowledge of which envelope the other player will choose in any way.

\section{Examples}
\label{sec:xeexamples}
In this section we compare the sets of Nash, exchangeable, and correlated equilibria of some small example games.  The main tool for doing so is Theorem~\ref{thm:cpvsdnn}.

Our first two examples are $2\times 2$ games, including the anti-coordination game discussed in the introduction, illustrating the general fact that $\xesym = \conv(\nesym)$ for such games (Theorem~\ref{thm:2by2}).  Third we show that the sets $\conv(\nesym)$, $\xesym$, and $\cesym$ can all be distinct for $3\times 3$ games and that $\xesym$ need not be polyhedral.  Fourth and finally we give a $3\times 3$ example showing the the maximum values of the expected utility / social welfare achievable by symmetric correlated, exchangeable, and Nash equilibria can all be distinct.

All examples except the fourth are symmetric bimatrix games of identical interest, i.e., they satisfy $A = A^T = B$.  The richness of behavior that arises in these games is interesting because identical interest games ($A = B$) are often thought of as somewhat trivial; any payoff maximizer is a pure Nash equilibrium, for example.   However, the pure Nash equilibria need not be symmetric, so when restricting attention to symmetric equilibria identical interest games seem to be a less trivial class.  

\begin{example}
\label{ex:coord}
Consider the coordination game with utility matrices $A = B = \left[\begin{smallmatrix}1 & 0 \\ 0 & 1\end{smallmatrix}\right]$.  Parametrize symmetric probability matrices as $\left[\begin{smallmatrix} p & q \\ q & r\end{smallmatrix}\right]$ with $p,q,r\geq 0$ and $p+2q+r =1$. Then the correlated equilibrium conditions are exactly $p\geq q$ and $r\geq q$.  Note that together these imply that $pr\geq q^2$, so any symmetric correlated equilibrium is automatically positive semidefinite (the principal minors are nonnegative), hence conditionally i.i.d.\ by  Theorem~\ref{thm:cpvsdnn} and so exchangeable.  Therefore the set of symmetric exchangeable equilibria is the same as the set of symmetric correlated equilibria and a simple computation shows that these sets are equal to
\begin{equation*}
\cesym = \xesym = \conv\left\{\begin{bmatrix}1 & 0 \\ 0 & 0\end{bmatrix}, \begin{bmatrix}0 & 0 \\ 0 & 1\end{bmatrix},\begin{bmatrix} 1/4 & 1/4 \\ 1/4 & 1/4 \end{bmatrix}\right\} = \conv(\nesym).
\end{equation*}
\end{example}

\begin{example}
\label{ex:anticoord}
Consider the anticoordination game with utility matrices $A = B = \left[\begin{smallmatrix} 0 & 1 \\ 1 & 0\end{smallmatrix}\right]$ discussed in Section~\ref{sec:unknownopp} and again in the context of The Minority Game in Section~\ref{sec:manyplayerinterp}.  Up to a positive affine transformation of the utilities this game is the same as the game Chicken analyzed in Section~\ref{sec:preview}, and so has the same equilibria:
\begin{equation*}
\begin{split}
\left\{\begin{bmatrix}1/4 & 1/4 \\ 1/4 & 1/4\end{bmatrix}\right\} & = \nesym = \xesym \\ & \subsetneq \cesym = \conv\left\{\begin{bmatrix} 0 & 1/2 \\ 1/2 & 0\end{bmatrix}, \begin{bmatrix} 1/3 & 1/3 \\ 1/3 & 0\end{bmatrix}, \begin{bmatrix} 0 & 1/3 \\ 1/3 & 1/3\end{bmatrix}, \begin{bmatrix} 1/4 & 1/4 \\ 1/4 & 1/4\end{bmatrix}\right\}.
\end{split}
\end{equation*}
\end{example}

The games in Examples~\ref{ex:coord} and~\ref{ex:anticoord} could be said to be isomorphic as bimatrix games but not as symmetric bimatrix games, in the sense that one can be transformed into the other by relabeling the strategies, but not by using the same relabeling for both players.  This accounts for the fact that their sets of symmetric exchangeable equilibria are not isomorphic.  The difference corresponds to the fact that when these games are played by many players simultaneously, there is a symmetric (with respect to the players) way to break symmetry between the two pure strategies in the case of the coordination game but not in the case of the anticoordination game.  

Also note that in both cases we had $\xesym = \conv(\nesym)$ as predicted by Theorem~\ref{thm:2by2}.  We now construct a $3\times 3$ game with $\conv(\nesym)\subsetneq \xesym\subsetneq \cesym$.

\savecounter{exeqsep}
\begin{example}
\label{ex:exeqsep}
Consider the game with utilities
\[
A = B = \begin{bmatrix}2 & 2 & 0 \\ 2 & 1 & 2 \\ 0 & 2 & 2\end{bmatrix}.
\]
First, we show strict containment in $\xesym\subsetneq
\cesym$.  Second, we show strict containment in $\conv(\nesym)\subsetneq\xesym$.  Third, we observe that $\xesym$ is not polyhedral.  These results are summarized in Figure~\ref{fig:exeqsep}.


\begin{figure}[tbp]
\begin{center}
\includegraphics{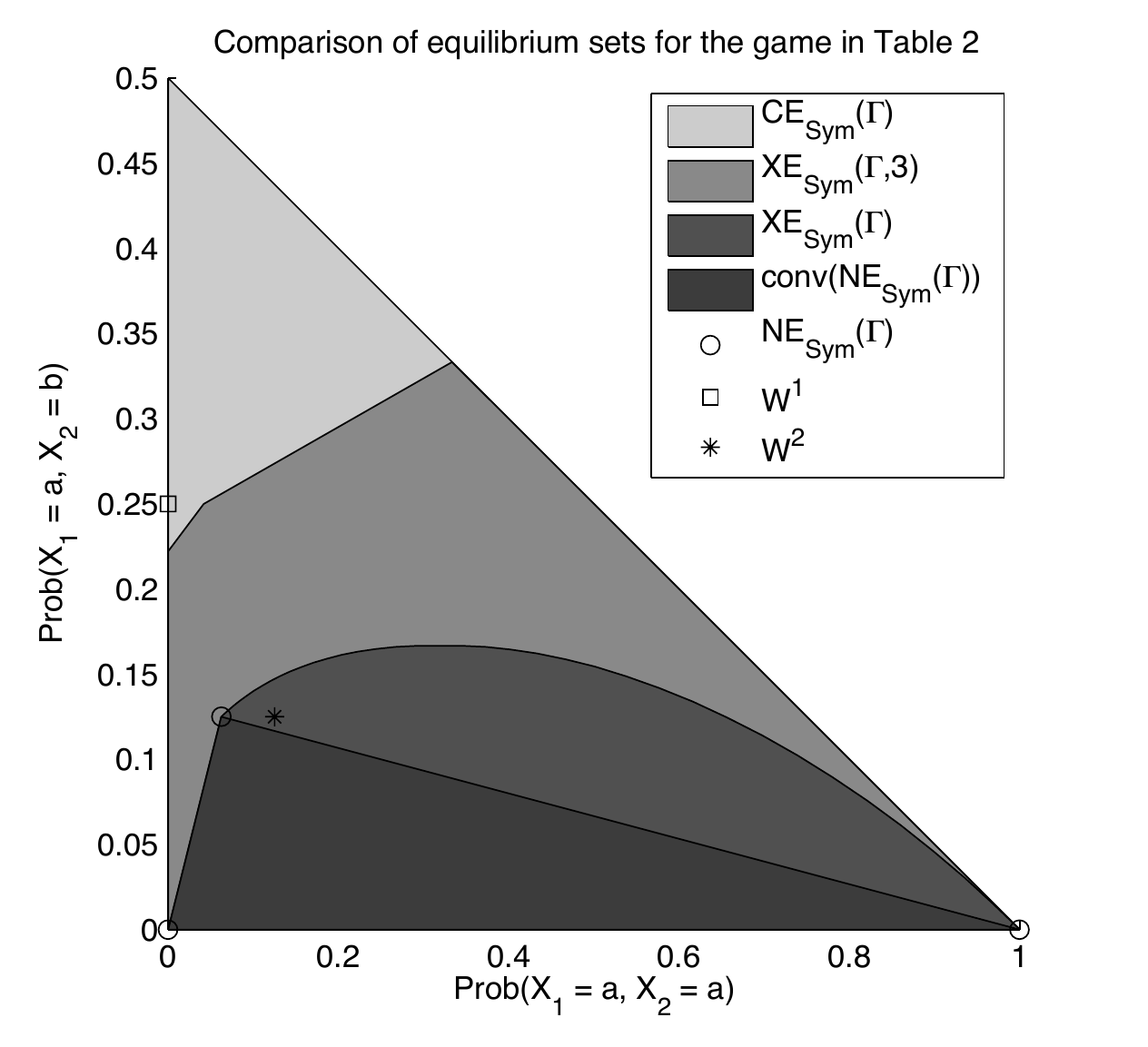}
\caption{Comparison of equilibrium sets for the game in Example~\ref{ex:exeqsep}.  These sets are naturally sets of symmetric $3\times 3$ matrices (or in the case of $\xesym(\Gamma,3)$, a set of $3\times 3\times 3$ arrays), but we have chosen a projection into two dimensions which highlights the separation between the sets.  The set of exchangeable equilibria is not polyhedral despite the fact that the other sets are.}
\label{fig:exeqsep}
\end{center}
\end{figure}

The matrix
\[
W^1 = \frac{1}{4}\begin{bmatrix}0 & 1 & 0 \\ 1 & 0 & 1 \\ 0 & 1 & 0\end{bmatrix}
\]
is a correlated equilibrium because both players get their maximum payoff with probability one.  By Proposition \ref{prop:cpzeros} it is not conditionally i.i.d., so it is a correlated equilibrium which is not exchangeable.  In fact we can say more: $W^1$ cannot be extended to a $3$-exchangeable distribution.  Suppose $X_1,X_2,X_3$ were random variables with a $3$-exchangeable distribution  taking values in $\{a,b,c\}$ such that the distribution of $X_1$ and $X_2$ were $W^1$.  Then
\begin{align*}
\frac{1}{4}=\prob(X_1 = a) &  \leq \prob\left((X_1 = a,X_2\neq b) \vee (X_1 = a,X_3\neq b) \vee (X_2 = b,X_3=b)\right) \\
& \leq \prob(X_1=a,X_2\neq b) + \prob(X_1=a,X_3\neq b) + \prob(X_2=b,X_3=b) \\
& = 2W^1_{11}+2W^1_{13} + W^1_{22} = 0,
\end{align*}
which is a contradiction, so no such $3$-exchangeable distribution exists.

One can also verify that
\[
W^2 
= \frac{1}{8}\begin{bmatrix}1 & 1 & 0 \\ 1 & 2 & 1 \\ 0 & 1 & 1\end{bmatrix}
= \frac{1}{8}\begin{bmatrix}1 & 0 \\ 1 & 1 \\ 0 & 1\end{bmatrix}\begin{bmatrix}1 & 1 & 0 \\ 0 & 1 & 1\end{bmatrix}
\]
is a correlated equilibrium, and the exhibited factorization shows that $W^2$ is conditionally i.i.d., hence exchangeable.  This factorization can be interpreted as a fully symmetric correlation scheme: the state of the world is $s = a$ or $s=c$ with equal probability and conditioned on the state, the players are each independently told to play either $s$ or $b$ with equal probability.  Or equivalently, the players make independent measurements of the state which succeed and return the true state $s$ half the time and fail and return ``inconclusive'' the other half of the time, leading the players to play $s$ and $b$, respectively.

Given this information structure it is an equilibrium for each player to play his recommendation / measurement.  Note that if the players always knew the true state and were never told `$b$' then this would no longer be an equilibrium.  That is to say, the distribution of actions conditioned on the state is not always a Nash equilibrium; indeed it is not in either state.

In fact $W^2$ is not a convex combination of Nash equilibria at all.  Suppose for a contradiction that it were. Then at least one of the Nash equilibria in the convex combination would have to assign positive probability to the strategy profile $(b,b)$.  Suppose player $2$ did not play $c$ with positive probability in such a Nash equilibrium.  Given this information, player $1$ prefers $a$ to $b$, so player $1$ cannot choose $b$ with positive probability in such an equilibrium, a contradiction.  Symmetric arguments show that each player must play all his strategies with positive probability.  Therefore this Nash equilibrium has full support.  But $W^2$ has entries which are zero, hence this Nash equilibrium cannot be included in an expression of $W^2$ as a convex combination of Nash equilibria.  Thus $W^2$ is not a convex combination of Nash equilibria.

This argument also shows that the only symmetric Nash equilibrium which assigns positive probability to $b$ is the one with full support, which a simple computation shows to be $\begin{bmatrix}\frac{1}{4} & \frac{1}{2} & \frac{1}{4}\end{bmatrix}$.  The only other symmetric Nash equilibria are $\begin{bmatrix}1 & 0 & 0\end{bmatrix}$ and $\begin{bmatrix}0 & 0 & 1\end{bmatrix}$.  There can be no symmetric Nash equilibrium which assigns zero probability to $b$ but positive probability to both $a$ and $c$.  Such an equilibrium would have to assign equal probability to $a$ and $c$, but $b$ is the unique best response to such a mixture.

As is well known, the set of correlated equilibria of a game is always polyhedral, hence so is the set of $k$-exchangeable equilibria.  The set of Nash equilibria is generically finite, so its convex hull is polyhedral for generic games (and for this game in particular).  It is visually evident, and can be proven algebraically, that the projection of the set of exchangeable equilibria pictured in Figure~\ref{fig:exeqsep} is not polyhedral.  In fact it is an algebraic curve of degree $11$ which factors into three linear components, a quadratic component, and a degree six component over $\Qm$.  Two of the linear components are easily visible (the bottom and left of the convex hull of the symmetric Nash equilibria), and the third corresponds to the maximum $y$ value, attained along the line segment joining $(\frac{11}{36},\frac{1}{6})$ to $(\frac{1}{3},\frac{1}{6})$.  The quadratic component corresponds to the curved portion of the boundary to the right of this maximum and is defined by the vanishing of $x^2 + 2xy + 4y^2 - x$.  The degree six component is the curved portion of the boundary to the left of the maximum; we omit its equation for brevity.
\end{example}

\begin{example}
\label{ex:payoffsep}
We show that the maximum expected utilities achievable by symmetric correlated equilibria, exchangeable equilibria, and symmetric Nash equilibria are all distinct in the game $\Gamma$ with payoffs
\begin{equation*}
A = B^T = \begin{bmatrix} 0 & 1 & 1 \\  2 & 1 & 0 \\ 1 & 0 & 1\end{bmatrix}.
\end{equation*}
The same is true with social welfare in place of utility; for symmetric bimatrix games and symmetric probability distributions both players receive the same expected utility, so the social welfare is just twice this utility.  The symmetric probability distribution
\begin{equation*}
W^1 = \begin{bmatrix} 0 & 1/2 & 0 \\ 1/2 & 0 & 0 \\ 0 & 0 & 0\end{bmatrix}
\end{equation*}
gives both players an expected utility of $\frac{3}{2}$.  Symmetric distributions always give both players the same utility in symmetric bimatrix games and $W^1$ is the only symmetric distribution (equilibrium or not) which yields a utility this high.  One can verify that $W^1$ is a correlated equilibrium, but it is not exchangeable by Proposition~\ref{prop:cpzeros} because it has zero diagonal, so $W^1\in\cesym(\Gamma)\setminus\xesym(\Gamma)$.  Furthermore $W^1$ achieves a higher utility than any exchangeable equilibrium.

The distribution
\begin{equation*}
W^2 \eqdef \frac{5}{7}\begin{bmatrix} 1/8 \\ 7/8 \\ 0\end{bmatrix}\begin{bmatrix} 1/8 \\ 7/8 \\ 0\end{bmatrix}^T + \frac{2}{7}\begin{bmatrix} 1/8 \\ 0 \\ 7/8 \end{bmatrix}\begin{bmatrix} 1/8 \\ 0 \\ 7/8\end{bmatrix}^T = \frac{1}{64}\begin{bmatrix} 1 & 5 & 2 \\ 5 & 35 & 0 \\ 2 & 0 & 14\end{bmatrix}
\end{equation*}
is exchangeable and it is also a correlated equilibrium, so it is an exchangeable equilibrium yielding expected utility $\frac{17}{16}$ to both players.

Suppose for a contradiction that there were a symmetric Nash equilibrium $(x,x)$ with expected utility $x^T A x > 1$.  Since $x$ is a probability vector, $[Ax]_1,[Ax]_3\leq 1$.  But $x^T A x$ is a convex combination of $[Ax]_1$, $[Ax]_2$, and $[Ax]_3$, so $[Ax]_2>1$.  Therefore strategy $2$ is the unique best response to $x$.  Only best responses are played with positive probability in a Nash equilibrium, so $x = \left[\begin{smallmatrix} 0 & 1 & 0\end{smallmatrix}\right]$.  But then the expected utility is $x^T A x = 1$, a contradiction.  Thus there is no symmetric Nash equilibrium which yields utility greater than $1$.  In particular $W^2\in\xesym(\Gamma)\setminus\conv(\nesym(\Gamma))$ and $W^2$ achieves a higher utility than any symmetric Nash equilibrium.

Indeed, Gambit \cite{mmt:gambit} computes that there are two symmetric pure Nash equilibria $\left[\begin{smallmatrix} 0 & 1 & 0\end{smallmatrix}\right]$ and $\left[\begin{smallmatrix} 0 & 0 & 1\end{smallmatrix}\right]$ as well as one symmetric mixed Nash equilibrium $\left[\begin{smallmatrix} \frac{1}{4} & \frac{1}{4} & \frac{1}{2}\end{smallmatrix}\right]$.  Both pure equilibria yield utility $1$ and the mixed equilibrium yields utility $\frac{3}{4}$.
\end{example}

\section{Conclusions and Future Work}
\label{sec:conclusions}
In the case of symmetric bimatrix games, we have argued that the exchangeable equilibria are a natural mathematical object with a variety of game-theoretic interpretations.  The set of exchangeable equilibria lies between the sets of Nash and correlated equilibria, and its structural properties are a mix of the two.  It is convex like the correlated equilibria, but its semialgebraicity and nonpolyhedrality remind one more of Nash equilibria.

The exchangeable equilibria are not the only object living in the gap between the Nash and correlated equilibria.  There are also, for example, Sorin's distribution equilibria \cite{s:dede}, which neither contain nor are contained in the exchangeable equilibria.  We believe there is room for more insight to be gained interpolating between Nash and correlated equilibria.

In this paper we have largely ignored the question of what ``exchangeable equilibrium'' should mean in symmetric games with more than two players.  We extend the theory to such games in Part~II, studying what is required of the symmetry structure to obtain theoretical results like those of Section~\ref{sec:xebasic} and interpretations along the lines of Section~\ref{sec:interp} of this paper.  This more abstract setting is a natural one for proving existence of exchangeable equilibria by adapting Hart and Schmeidler's argument \cite{hs:ece}.  We also modify the sealed envelope interpretation of Section~\ref{sec:symenvmachine} to design a hierarchy of nested convex sets called higher order exchangeable equilibria interpolating between the exchangeable equilibria and the convex hull of the symmetric Nash equilibria, to which they converge in the limit.


These ideas and relevant computational questions are all explored further in the first author's doctoral thesis \cite{s:doctoralthesis}.

\section*{Acknowledgements}
The authors would like to acknowledge Muhamet Yildiz and Dirk Bergemann for their helpful insights and suggestions on this topic.  This research was funded by the National Science Foundation under Award 1027922 and the Air Force Office of Scientific Research under grant FA9550-11-1-0305.


\bibliographystyle{plain}
\bibliography{../../references}
\end{document}